\newtheorem{lemma}{Lemma}
\newtheorem{proposition}{Proposition}
\newtheorem{remark}{Remark}
\begin{document}

\title{Modular PE-Structured Learning for Cross-Task Wireless Communications}


\author{Yuxuan Duan and Chenyang Yang}

\vspace{-5mm}
\maketitle
\begin{abstract}
Recent trends in learning wireless policies attempt to develop deep neural networks (DNNs) for handling multiple tasks with a single model. Existing approaches often rely on large models, which are hard to pre-train and fine-tune at the wireless edge.  In this work, we challenge this paradigm by leveraging the structured knowledge of wireless problems -- specifically, permutation equivariant (PE) properties. We design three types of PE-aware modules, two of which are Transformer-style sub-layers. These modules can serve as building blocks to assemble compact DNNs applicable to the wireless policies with various PE properties. To guide the design, we analyze the hypothesis space associated with each PE property, and show that the PE-structured module assembly can boost the learning efficiency.
Inspired by the reusability of the modules, we propose PE-MoFormer, a compositional DNN capable of learning a wide range of wireless policies  -- including but not limited to precoding, coordinated beamforming, power allocation, and channel estimation -- with strong generalizability, low sample and space complexity. Simulations demonstrate that the proposed modular PE-based framework outperforms relevant large model in both learning efficiency and inference time, offering a new direction for structured cross-task learning for wireless communications.
\end{abstract}
\vspace{-1mm}
\begin{IEEEkeywords}
Transformer, wireless communications, permutation equivariance, multiple tasks, generalizability
\end{IEEEkeywords}

\IEEEpeerreviewmaketitle

\vspace{-2mm}\section{Introduction}\label{sec:intro}\vspace{-1mm}
Deep neural networks (DNNs) have been developed for a multitude of wireless problems, such as resource allocation and channel acquisition. Conventional deep learning  approaches train a DNN for each specific problem. Despite its good performance, the large number of wireless problems makes it impractical to train and store the specific DNNs.

In the past several years, large language models (LLMs), with a backbone of Transformer, have attracted widespread attention \cite{LLMSurvey}. These LLMs are pre-trained and then fine-tuned for various downstream tasks. Inspired by their versatility, recent research efforts have been made to leverage pre-trained large models for multiple wireless tasks \cite{DaiLL_LLMMultitask,LLM4WMchenxiang,wirelessGPT}. However, applying large models to wireless communications faces several challenges, say the high sample, memory, and computing demands, the limited adaptability to dynamic wireless systems, and the lack of structured communication knowledge \cite{LargeAI,ReflectionOnGPT}.

To tackle the difficulties, a promising direction is resorting to generic wireless knowledge. For example, permutation equivariance (PE) properties, though still underappreciated, exist in a variety of wireless problems. By leveraging PE properties for constraining hypothesis space (i.e., the functions representable by a DNN), DNNs often exhibit lower complexity and better generalizability to dynamic problem sizes (e.g., the number of users)\cite{zhao2022understanding, shenyifei, Gformer}. Nonetheless, how to design a cross-task DNN (say Transformer) by incorporating PE properties is still open.

\vspace{-2mm}\subsection{Related Works}\vspace{-0.5mm}
\subsubsection{Transformers for Wireless Communications}\vspace{-0.1mm}
Transformer was originally proposed for sequential tasks \cite{vaswani2017attention}, whose input and output are sequences of tokens. The core of the Transformer is the attention mechanism, which can capture the dependence among tokens.

Transformer has been introduced to optimize wireless problems \cite{MagazineTransformer}, mostly for channel acquisition tasks including channel prediction \cite{CPdai, LinFormer},  estimation \cite{trans_CE_payless_TWC}, and feedback \cite{transCSIfeed4}. For example, in \cite{CPdai}, the channels in past time slots were treated as tokens, which are input into a Transformer for channel prediction, using the attention mechanism to capture the temporal correlation.

Transformer has also been integrated with other DNNs for resource allocation \cite{Kansformer, mehrabian2024joint, TransformerIndirect, AdaTTD_NearField, li2024hpe, Gformer}. For example, graph Transformers were designed in \cite{Gformer} to learn baseband and hybrid precoding in multi-user multi-input-single-output (MU-MISO) systems for maximizing spectrum efficiency (SE). Each user was treated as a token, enabling the attention to model multi-user interference (MUI). In \cite{Kansformer}, a Transformer combined with a Kolmogorov-Arnold network was designed to optimize precoding for maximizing the energy efficiency of MU-MISO systems.  In \cite{mehrabian2024joint}, a graph Transformer was designed to jointly optimize bandwidth allocation, precoding at the base station (BS), and coefficients of intelligent reflecting surfaces (IRS).

All these DNNs are task-specific, i.e., each DNN is designed and trained for a single problem.

\subsubsection{DNNs with Permutation Properties}
A diverse array of wireless optimization problems consist of sets \cite{liu2023multidimensional}, say signal detection \cite{Detection}, power/bandwidth allocation \cite{guo2022het_pc, Satellite_1D-PEGNN2024, ENGNNLiyang_2024,Sunchengjian}, and precoding \cite{zhao2022understanding, liu2023multidimensional,mehrabian2024joint,HybridBeamforming_1D-PEGNN2024,GAT}.
As a result, the corresponding \emph{wireless policies}, each of which is a mapping from environment parameters to optimized variables, satisfy permutation properties\cite{Sunchengjian}.
There are many types of permutation properties. For example, the MU-MISO precoding policy satisfies an independent two-dimensional (2D)-permutation equivariance (PE) property \cite{zhao2022understanding}, and the power control for device-to-device (D2D) communications satisfies a joint 2D-PE property \cite{shenyifei}.
The permutation properties of wireless policies can be identified by analyzing the types of sets and their relations in wireless problems with the method in \cite{liu2023multidimensional}.

By leveraging the PE properties as prior, the training complexity of the designed DNN can be reduced remarkably. Moreover, a DNN satisfying the PE property induced by a set has the potential for generalizing to the size of the set \cite{guo2024recursive}. With this size-generalizability, the well-trained DNN can be used for inference in problems of different scales without re-training, which is desirable for dynamic wireless systems.

For example, both the GNNs designed in \cite{HybridBeamforming_1D-PEGNN2024} and in \cite{GAT} satisfy the PE property induced by the user set,
which can be generalized to the number of users.
Yet these works only employ the PE property induced by a part of the sets of the considered problems, while the permutability induced by other sets, say the set of antennas (ANs) at the BS, is neglected. In \cite{liu2023multidimensional}, a unified framework was proposed for designing a graph convolutional network (GCN) whose input-output relation has exactly the same PE property of a wireless policy (i.e., \emph{the GCN has matched PE property to the policy}). By exploiting the permutations induced by all sets of a problem, the designed GCNs therein were validated to have lower sample complexity and be generalizable to the sizes of all these sets.

Another example is the tailored Transformers, which have been designed to satisfy simple PE properties. In \cite{SetTransformer2019}, a Set-Transformer was designed to satisfy a one-dimensional (1D)-PE property by removing order-sensitive components in Transformers like positional encoding. In \cite{li2024hpe}, a hierarchical Transformer (Hier-Former) was proposed to satisfy a nested 1D-PE property of a multi-cast policy in MU-MISO systems.
In \cite{Gformer}, a 2D-Former and three-dimensional (3D)-Former were designed to satisfy the independent 2D- and 3D-PE properties.
 However, these Transformer-based DNNs are unable to satisfy more complex PE properties, say the partial nested 2D-PE property of multi-user multi-input-multi-output
(MU-MIMO) precoding policy. With mismatched property, a DNN has high training complexity or degraded learning performance.

Again, each of these DNNs was designed to satisfy a single type of PE property, restricting their efficiency for multiple policies with different PE properties.

\subsubsection{DNNs for Multiple Tasks}
Since training a specific DNN for every task is costly, several studies strived to design DNNs for more than one wireless task \cite{Yuwei_Transferlearning_2024,AiBo_Tcom_PlugPlay}. For example, in \cite{Yuwei_Transferlearning_2024}, a DNN was proposed for a couple of wireless tasks with the same input. Specifically, a shared fully-connected neural network (FNN) was designed to extract common representations across tasks, which is cascaded with private FNNs to produce task-specific outputs. This DNN can be applied to problems with different optimization objectives, such as maximizing the sum rate or the minimum rate of users. Nonetheless, as a kind of multi-output learning \cite{SurveyonMultitaskLearning}, the DNN can only be applied to tasks with the same input.

Encouraged by the success of LLMs in handling diverse tasks, recent works resorted to large models for multiple wireless problems. In \cite{DaiLL_LLMMultitask}, a pre-trained LLM (GPT-2 or LLAMA2) was fine-tuned to share across three tasks: channel prediction, signal detection, and MU-MISO precoding. For each task, task-specific encoder and decoder were designed before and after the LLM. In \cite{LLM4WMchenxiang}, a pre-trained GPT-2 combined with task-specific FNNs or convolutional neural networks was designed for five tasks: channel estimation and prediction, beam selection, and estimation of distance and pathloss.
In \cite{wirelessGPT}, a Transformer with over 80M parameters was pre-trained using samples of wireless channels for extracting common channel representations, which is then cascaded with task-specific DNNs to produce outputs for downstream tasks.

Despite the multi-task capability, the upstream or downstream DNNs ought to be designed and trained for every task. Moreover, a large number of training samples (ranging from 14.8k to 50k for each task in \cite{DaiLL_LLMMultitask,LLM4WMchenxiang}) and model parameters (ranging from 80M to 7B in \cite{DaiLL_LLMMultitask,LLM4WMchenxiang,wirelessGPT}) are required, which makes pre-training and online fine-tuning challenging.

\vspace{-4mm}\subsection{Motivation and Contributions}\vspace{-1mm}
Existing DNNs for multiple wireless tasks are limited to the tasks with a single input, or suffer from high complexity for pre-training, fine-tuning, and inference. Moreover, the size-generalizability of these DNNs is rarely considered.

In this paper, we challenge the prevailing large model paradigm with structured communication knowledge. In particular, we conceive PE-MoFormer, a DNN with \textbf{PE}-aware \textbf{mo}dules including Trans\textbf{former}-style sub-layers, to enable efficient learning across a range of wireless problems. The difficulties in properly designing and assembling modules are addressed by comprehending PE-structured DNNs and wireless tasks.
Our major contributions are as follows.
\begin{itemize}
    \item \textbf{PE-guided modular design and assembly:} We design three modules, each satisfying a specific PE property. By proving that the hypothesis space of a DNN depends on the layer with the largest hypothesis space, we show that these modules and the modules in \cite{Gformer} can be assembled into multiple specific PE-Formers.

    \item \textbf{Cross-task compositional DNN:} We propose a PE-MoFormer, which enables cross-task learning on diverse wireless tasks by composing appropriate modules to satisfy various PE properties. This design is inspired by the observation that different PE properties can be satisfied by reusing a small set of modules, and the same structural property is shared among more than one wireless policy.

    \item \textbf{High learning efficiency:} For a proof-of-concept demonstration, we use simulations to evaluate the proposed PE-MoFormer on the tasks of channel estimation, precoding, coordinated beamforming (CB), and power allocation.
        The results show that the PE-MoFormer can achieve good performance across the tasks with only 100 training samples per task and 9.71k parameters in total. Moreover, it has strong generalizability to problem sizes, signal-to-noise ratios (SNRs), and channel statistics, exemplifying its adaptability to dynamic wireless environments.

\end{itemize}

The rest of this paper is organized as follows: In section \ref{sec:PE properties}, we introduce several PE properties and wireless policies with these properties. In section \ref{sec:1D-2D-Former}, we recap two existing Transformer-based DNNs with PE properties. In section \ref{sec:methodology}, we analyze the hypothesis space of DNNs, design three modules and assemble them into specific PE-Formers, and propose the cross-task  PE-MoFormer. Simulations and conclusions are provided in sections \ref{sec:simulations} and \ref{sec:conclusion}.

\textit{Notations}: $\mathbf{X} = [x_{i,j}]$ or $\mathbf{X} = [\mathbf{X}_{i,j}]$ denotes a matrix with the element or sub-matrix in the $i$-th row $j$-th column being $x_{i,j}$ or $\mathbf{X}_{i,j}$. $\mathbf{x}_i$ represents the $i$-th column vector of matrix $\mathbf{X}$. $\mathbf{X} = \mathrm{diag}(\mathbf{x}_1,\cdots,\mathbf{x}_N)$ denotes a diagonal matrix with the $i$-th diagonal block being $\mathbf{x}_i$. $\mathbf{I}_{N}$ denotes an $N\times N $ identity matrix where the subscript $N$ may be omitted for brevity. $\mathsf{Tr}(\cdot)$ denotes the trace of a matrix. $(\cdot)^{\mathsf{T}}$ and $(\cdot)^{\mathsf{H}}$ denote the transpose and conjugate transpose operations, respectively. $\otimes$ and $\odot$ represent the Kronecker product and Hadamard product of a matrix, respectively. $\mathbb{R}$ and $\mathbb{C}$ denote the real and complex number fields, and $\mathsf{Re}(\cdot)$ and $\mathsf{Im}(\cdot)$ represent the real and imaginary parts of a complex vector or matrix, respectively. $\subsetneq$ stands for proper inclusion between sets.

\vspace{-1mm}\section{PE Properties and Wireless Policies}\label{sec:PE properties}\vspace{-0.1mm}
In this section, we introduce several PE properties that have been found in wireless communications and present several wireless policies to demonstrate the PE structured knowledge.

\vspace{-2mm}\subsection{PE Properties of Functions Defined on Different Sets}\label{sec:PEs}\vspace{-0.5mm}
The functions defined on sets have PE properties \cite{zaheer2017deep}.
\subsubsection{Sets}
A \textbf{normal set} is composed of unordered elements. A \textbf{nested set} is a set of subsets, where the elements in every subset and the subsets can be permuted arbitrarily. For example, $\{a_1, a_2,a_3\}$ is a normal set where the elements $a_i$ and $a_j$ can be arbitrarily permuted. $\{\{a_1,a_2,a_3\},\{b_1,b_2\}\}$ is a nested set with two subsets, where $a_i$ and $a_j$ (or $b_1$ and $b_2$) can be arbitrarily permuted, and the two subsets can be swapped. However, permutation among the elements in different subsets (e.g., permuting $a_i$ and $b_j$) is not allowed.

\subsubsection{PE Properties}
A function may be defined over more than one set. Depending on the types of sets and the relation between the sets, the functions on sets exhibit different PE properties. A large number of wireless policies are defined on sets \cite{liu2023multidimensional}, whose inputs and outputs are feature vectors/matrices/tensors of the sets \cite{Sunchengjian}. For easy understanding with simpler notations, we only present eight PE properties.

A function defined on \emph{a single set} satisfies one of two PE properties as follows, depending on the type of the set.

\begin{subequations}
\textbf{1D-PE property:}
\vspace{-2mm}
\begin{align}\label{eq:pre_1d_pe}
 \mathbf{Y}\mathbf{\Pi} = f(\mathbf{X}\mathbf{\Pi})
\end{align}

\vspace{-2mm}
\textbf{Nested 1D-PE property:}
\vspace{-2mm}
\begin{align}\label{eq:pre_nested_1d_pe}
    &\quad\quad\quad\quad\quad\mathbf{Y}\mathbf{\Omega} = f(\mathbf{X}\mathbf{\Omega}),\\\notag&\mathbf{\Omega} = (\mathbf{\Pi}_{\mathsf{sub}}\otimes\mathbf{I}_{N_\mathsf{sub}})\mathrm{diag}(\mathbf{\Pi}_{1},\cdots,\mathbf{\Pi}_{N_\mathsf{sub}})
\end{align}
\end{subequations}
 where $\mathbf{\Pi}$ is an arbitrary permutation matrix,\footnote{By arbitrary, we mean  $\mathbf{\Pi}$ has $N!$ possible values for a set with $N$ elements. For instance, a function defined on a normal set with two elements, $\mathbf{Y}\triangleq [\mathbf{y}_1, \mathbf{y}_2] = f([\mathbf{x}_1,\mathbf{x}_2]) \triangleq f(\mathbf{X})$, has two possible permutation matrices, where $\mathbf{Y}$ and $\mathbf{X}$ are the feature matrices of the set. One possible permutation matrix is $\mathbf{\Pi} = \begin{bmatrix}0,1\\1,0\end{bmatrix}$, which permutes the order of the elements in $\mathbf{X}$ as $[\mathbf{x}_2,\mathbf{x}_1]$ and the order of the elements in $\mathbf{Y}$ as $[\mathbf{y}_2,\mathbf{y}_1]$. The other is $\mathbf{\Pi} = \mathbf{I}_{2}$.} $\mathbf{\Omega}$ is a nested permutation matrix comprises two levels of permutation (i.e., permuting the elements in each subset with  $\mathbf{\Pi}_{i},i=1,\cdots,N_{\mathsf{sub}}$ and permuting the subsets with $\mathbf{\Pi}_{\mathsf{sub}}$), and
 $N_{\mathsf{sub}}$ is the number of subsets in the nested set.

 When a function is defined on \emph{a normal set},
 it satisfies the \emph{1D-PE property} in \eqref{eq:pre_1d_pe}.
When a function is defined on \emph{a nested set},
it satisfies the \emph{nested 1D-PE property} in \eqref{eq:pre_nested_1d_pe}.

A function defined on \emph{two sets} satisfies one of six 2D-PE properties as follows, depending on the types of the sets and the relation between the sets.
\begin{subequations}

\textbf{Ind. 2D-PE property:}
\vspace{-1mm}\begin{align}\label{eq:pre_2d_pe}
\mathbf{\mathbf{\Pi}}_\mathsf{A}^\mathsf{T}\mathbf{Y}\mathbf{\mathbf{\Pi}}_\mathsf{B} = f(\mathbf{\mathbf{\Pi}}_\mathsf{A}^\mathsf{T}\mathbf{X}\mathbf{\mathbf{\Pi}}_\mathsf{B})
\end{align}

\vspace{-1mm}
\textbf{Joint 2D-PE property:}
\vspace{-1mm}
\begin{align}\label{eq:pre_joint_2d_pe}
\mathbf{\mathbf{\Pi}}^\mathsf{T}_{\mathsf{A}}\mathbf{Y}\mathbf{\mathbf{\Pi}}_{\mathsf{B}} = f(\mathbf{\mathbf{\Pi}}^\mathsf{T}_\mathsf{A}\mathbf{X}\mathbf{\mathbf{\Pi}}_{\mathsf{B}}),  \text{  if }\mathbf{\mathbf{\Pi}}_{\mathsf{A}} = \mathbf{\mathbf{\Pi}}_{\mathsf{B}}
\end{align}

\vspace{-2mm}
\textbf{Partial-nested 2D-PE property:}
\vspace{-2mm}
\begin{align}\label{eq:pre_partial_nested_2d_pe}
\mathbf{\mathbf{\Pi}}_{\mathsf{A}}^\mathsf{T}\mathbf{Y}\mathbf{\mathbf{\Omega}} = f(\mathbf{\mathbf{\Pi}}_{\mathsf{A}}^\mathsf{T}\mathbf{X}\mathbf{\mathbf{\Omega}})
\end{align}

\textbf{Nested 2D-PE properties:}

\vspace{1mm}
\hspace{5mm}\textbf{Nested ind. 2D-PE property}
\vspace{-1.5mm}
\begin{align}\label{eq:pre_nested_indepedent_2D_pe}\mathbf{\mathbf{\Omega}}_\mathsf{A}^\mathsf{T}\mathbf{Y}\mathbf{\mathbf{\Omega}}_\mathsf{B} = f(\mathbf{\mathbf{\Omega}}_\mathsf{A}^\mathsf{T}\mathbf{X}\mathbf{\mathbf{\Omega}}_\mathsf{B})
\end{align}

\vspace{-1.5mm}
\hspace{5mm}\textbf{Nested partial-joint 2D-PE property}
\vspace{-1.5mm}
\begin{align}\label{eq:pre_nested_partial_joint_2d_pe}\mathbf{\mathbf{\Omega}}_\mathsf{A}^\mathsf{T}\mathbf{Y}\mathbf{\mathbf{\Omega}}_\mathsf{B} = f(\mathbf{\mathbf{\Omega}}_\mathsf{A}^\mathsf{T}\mathbf{X}\mathbf{\mathbf{\Omega}}_\mathsf{B}),\text{  if }\mathbf{\mathbf{\Pi}}_{\mathsf{A,sub}} = \mathbf{\mathbf{\Pi}}_{\mathsf{B,sub}}
\end{align}

\vspace{-1.5mm}
\hspace{5mm}\textbf{Nested joint 2D-PE property}
\vspace{-1.5mm}
\begin{align}\label{eq:pre_nested_joint_2d_pe}\mathbf{\mathbf{\Omega}}_\mathsf{A}^\mathsf{T}\mathbf{Y}\mathbf{\mathbf{\Omega}}_\mathsf{B} = f(\mathbf{\mathbf{\Omega}}_\mathsf{A}^\mathsf{T}\mathbf{X}\mathbf{\mathbf{\Omega}}_\mathsf{B}),\text{  if }\mathbf{\mathbf{\Omega}}_{\mathsf{A}} = \mathbf{\mathbf{\Omega}}_{\mathsf{B}}
\end{align}

\end{subequations}\vspace{-2mm}
where the two nested permutation matrices are
\vspace{-3mm}
\begin{align}
& \mathbf{\Omega}_\mathsf{A} =(\mathbf{\Pi}_{\mathsf{A,sub}}\otimes\mathbf{I}_{N_\mathsf{sub}})\mathrm{diag}(\mathbf{\Pi}_{\mathsf{A},1},\cdots,\mathbf{\Pi}_{\mathsf{A},N_\mathsf{sub}})\notag
\\
& \mathbf{\Omega}_\mathsf{B} =(\mathbf{\Pi}_{\mathsf{B,sub}}\otimes\mathbf{I}_{N_\mathsf{sub}})\mathrm{diag}(\mathbf{\Pi}_{\mathsf{B},1},\cdots,\mathbf{\Pi}_{\mathsf{B},N_\mathsf{sub}})\notag
\end{align}

When a function is defined on \emph{two normal sets}, it satisfies the property in \eqref{eq:pre_2d_pe} or \eqref{eq:pre_joint_2d_pe}.
If the permutations of the two sets are independent (i.e., the elements in each set can be permuted disregarding the permutation of the elements in another set), the function satisfies the \emph{independent (ind.) 2D-PE property}.
If the elements in a set should be permuted together with the elements in the other set (i.e., permuting jointly), the function satisfies the \emph{joint 2D-PE property}.

When a function is defined on \emph{a normal set and a nested set}, it satisfies a \emph{partial-nested 2D-PE property} in \eqref{eq:pre_partial_nested_2d_pe}.

When a function is defined on two nested sets,  it satisfies the property in \eqref{eq:pre_nested_indepedent_2D_pe}, \eqref{eq:pre_nested_partial_joint_2d_pe}, or \eqref{eq:pre_nested_joint_2d_pe}. If the permutation of a nested set is independent of the other, the function satisfies the \emph{nested ind. 2D-PE property}. If the subsets in the two nested sets should be permuted jointly, but the elements in each subset of a nested set can be permuted independently from the permutation of the elements in another nested set, the function has the \emph{nested partial-joint 2D-PE property}.
If both the subsets and the elements in each subset should be permuted jointly, the function has the \emph{nested joint 2D-PE property}.

Furthermore, a function defined on more than two sets satisfies different high-dimensional PE properties, again depending on the types of the sets and the relation
among the sets.

For example, a function defined on three normal sets has the \emph{ind. 3D-PE property} as follows, if the permutation in each set is independent of the permutations of the other sets.

\textbf{Ind. 3D-PE property:}
\vspace{-2mm}
\begin{equation}
\begin{aligned}\label{eq:pre_3d_pe}
\boldsymbol{Y}_{\pi_{\mathsf{A}}(i),\pi_\mathsf{B}(j),\pi_{\mathsf{C}}(k)} = f(\boldsymbol{X}_{\pi_{\mathsf{A}}(i),\pi_\mathsf{B}(j),\pi_{\mathsf{C}}(k)})
\end{aligned}
\end{equation}
where $\boldsymbol{X}$ and $\boldsymbol{Y}$ are tensors with three dimensions, the slice along each dimension is the feature of each set, and $\pi(\cdot)$ denotes the permutation operation on a set.

\vspace{-1mm}\begin{remark}
Some wireless policies have permutation invariance (PI) properties, which are not analyzed for conciseness.
\end{remark}

\vspace{-4mm} \subsection{Wireless Policies with Various PE Properties}\label{w-policies}\vspace{-0.5mm}

To avoid presenting multiple optimization problems, we provide an optimization problem for a relatively general system setting, which can degenerate into different problems.
Specifically, consider a system with $M$ cells that share the same spectrum. In the $m$-th cell (denoted as $\text{Cell}_m$), a BS (denoted as $\text{BS}_m$) serves multiple user equipments (UEs). Each BS is equipped with $N_{\mathsf{t}}$ antennas (ANs) and serves $K$ UEs, where each UE is equipped with $N_{\mathsf{r}}$ ANs.

The precoding matrices at all BSs can be jointly optimized, say to maximize SE \cite{wmmse}, from the following problem\vspace{-2mm}
\begin{equation}\label{eq:multi-cell problem mul-users}
\begin{aligned}
&\mathrm{P1}:\max_{\{\mathbf{V}_{m_k}\}}\sum_{m=1}^{M}\sum_{k=1}^{K}\log_{2}\mathrm{det}\Bigg(\mathbf{I}+\mathbf{H}_{m,m_k}^{\mathsf{H}}\mathbf{V}_{m_k}\mathbf{V}_{m_k}^{\mathsf{H}}\mathbf{H}_{m,m_k}\\&\quad\quad\quad\quad\bigg(\sum_{(i,j)\neq(m,k)}\mathbf{H}_{i,m_k}^{\mathsf{H}}\mathbf{V}_{i_j}\mathbf{V}_{i_j}^{\mathsf{H}}\mathbf{H}_{i,m_k}+\sigma_{0}^{2}\mathbf{I}\bigg)^{-1}\Bigg)\\
    &\quad\quad\,\,\mathrm{s.t.}\quad\quad\quad\sum_{k=1}^{K}\mathsf{Tr}(\mathbf{V}_{m_k}^{\mathsf{H}}\mathbf{V}_{m_k}) \leq P_t, \quad\quad\forall m
 \end{aligned}
\end{equation}
where $\mathbf{V}_{m_k} \in \mathbb{C}^{N_\mathsf{t} \times N_{\mathsf{r}}}$ and  $\mathbf{H}_{i,m_k} \in \mathbb{C}^{N_\mathsf{t} \times N_{\mathsf{r}}}$ are respectively the precoding matrix for the $k$-th UE in $\text{Cell}_m$ (denoted as $\text{UE}_{m_k}$) and the channel matrix from $\text{BS}_i$ to $\text{UE}_{m_k}$, $\sigma^2_0$ is the noise power, and $P_t$ is the total power of each BS.

Denote the precoding policy as $\mathbf{V} = g(\mathbf{\mathbf{H}})$, where $\mathbf{V}=\mathrm{diag}(\mathbf{V}_1,\cdots,\mathbf{V}_M) \in \mathbb{C}^{MN_\mathsf{t} \times MKN_\mathsf{r}}$, $\mathbf{V}_m = [\mathbf{V}_{m_1},\cdots,\mathbf{V}_{m_K}] \in \mathbb{C}^{N_\mathsf{t}\times KN_{\mathsf{r}}}$ is the precoding matrix at $\text{BS}_m$, and $\mathbf{H} = [\mathbf{H}_{i,j_k}] \in \mathbb{C}^{MN_\mathsf{t}\times MKN_{\mathsf{r}}}$ is the channel matrix from all BSs to all UEs.
By using the method in \cite{liu2023multidimensional}, it is not hard to examine that the policy is defined on a set containing $MN_\mathsf{t}$  ANs at all BSs (called $\text{AN}^{\tt BS}$ set for short) and another set containing $MKN_\mathsf{r}$ ANs at all UEs (called $\text{AN}^{\tt UE}$ set for short),  from the dimensions of $\mathbf{V}$ and $\mathbf{\mathbf{H}}$.

The $\text{AN}^{\tt BS}$ set and $\text{AN}^{\tt UE}$ set can either be normal or nested sets, and their relation can either be dependent or independent, relying on the system settings. In the sequel, we consider several special cases of Problem $\mathrm{P1}$ to explain the PE properties induced by the types of sets and their relations.

\subsubsection{MU-MISO Precoding (Ind. 2D-PE Property)}
For a single cell system where the BS serves $K$ single-antenna UEs (i.e., $M=1,N_{\mathsf{r}}=1$), Problem $\mathrm{P1}$ reduces to optimizing precoding in a MU-MISO system, as illustrated in Fig. \ref{fig:graph_MISO}.

 In this system setting, $\mathbf{V} = [\mathbf{v}_1,\cdots,\mathbf{v}_K] \in \mathbb{C}^{N_\mathsf{t} \times K}, \mathbf{H} = [\mathbf{h}_1,\cdots,\mathbf{h}_K] \in \mathbb{C}^{N_\mathsf{t} \times K}$, $\mathbf{v}_k \in \mathbb{C}^{N_{\mathsf{t}}}$ and $\mathbf{h}_k \in \mathbb{C}^{N_{\mathsf{t}}}$ are respectively the precoding and channel vectors of the $k$-th UE (denoted as $\text{UE}_k$).

 Denote the $n$-th AN at the BS as $\text{AN}_n^{\tt BS}$ and denote the AN at $\text{UE}_k$ as $\text{AN}^{\mathtt{UE}_k}$. Since the ANs in the $\text{AN}^{\tt BS}$ set or the $\text{AN}^{\tt UE}$ set (reducing to a UE set because each UE has a single AN in this setting) can be arbitrarily permuted, the problem consists of two normal sets as follows \cite{liu2023multidimensional},

    $\text{AN}^{\tt BS}$ set: \{$\text{AN}_1^{\tt BS}, \cdots,\text{AN}_{N_{\mathsf{t}}}^{\tt BS}$\},

    $\text{AN}^{\tt UE}$ set: \{$\text{AN}^{\mathtt{UE}_1}, \cdots,\text{AN}^{\mathtt{UE}_K}$\} (i.e.,
    \{$\text{UE}_1, \cdots,\text{UE}_K$\}).\\
The MU-MISO precoding policy satisfies the ind. 2D-PE property in \eqref{eq:pre_2d_pe}  \cite{zhao2022understanding}, where $\mathbf{\Pi}_\mathsf{A}$ and $\mathbf{\Pi}_\mathsf{B}$ reflect the permutations of the ANs at the BS and UEs, respectively.

\begin{figure*}[!t] \centering
\subfigure[MU-MISO precoding, $N_{\mathsf{t}}=3, K=2$.] {
 \label{fig:graph_MISO}
\includegraphics[width=0.21\linewidth]{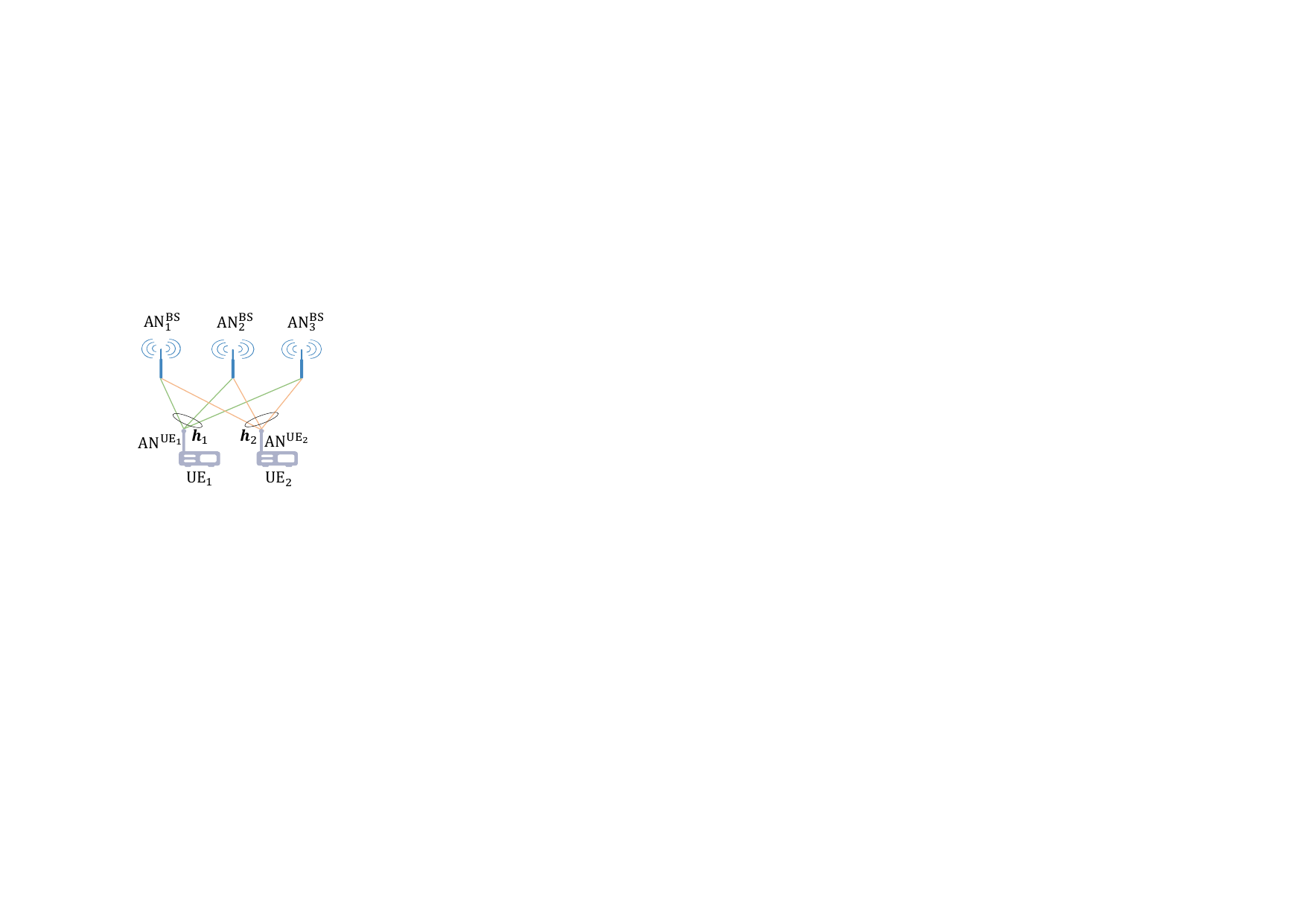}
}\hfil
\subfigure[MU-MIMO precoding, $N_{\mathsf{t}}=3, K=2, N_{\mathsf{r}}=2$.] {
 \label{fig:graph_MIMO}
\includegraphics[width=0.21\linewidth]{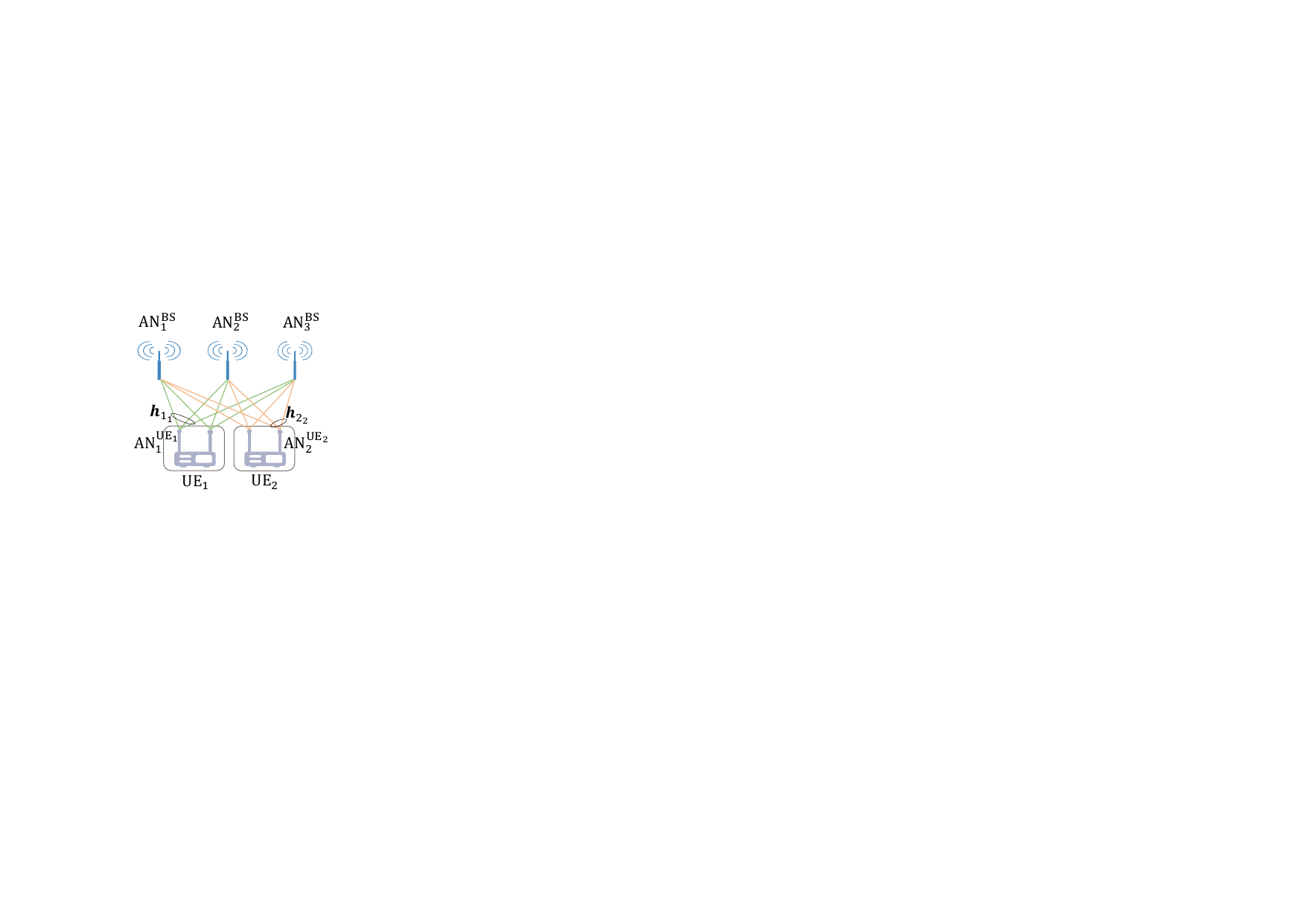}
}
\subfigure[Power control, $M=3$.] {
 \label{fig:graph_PC}
\includegraphics[width=0.21\linewidth]{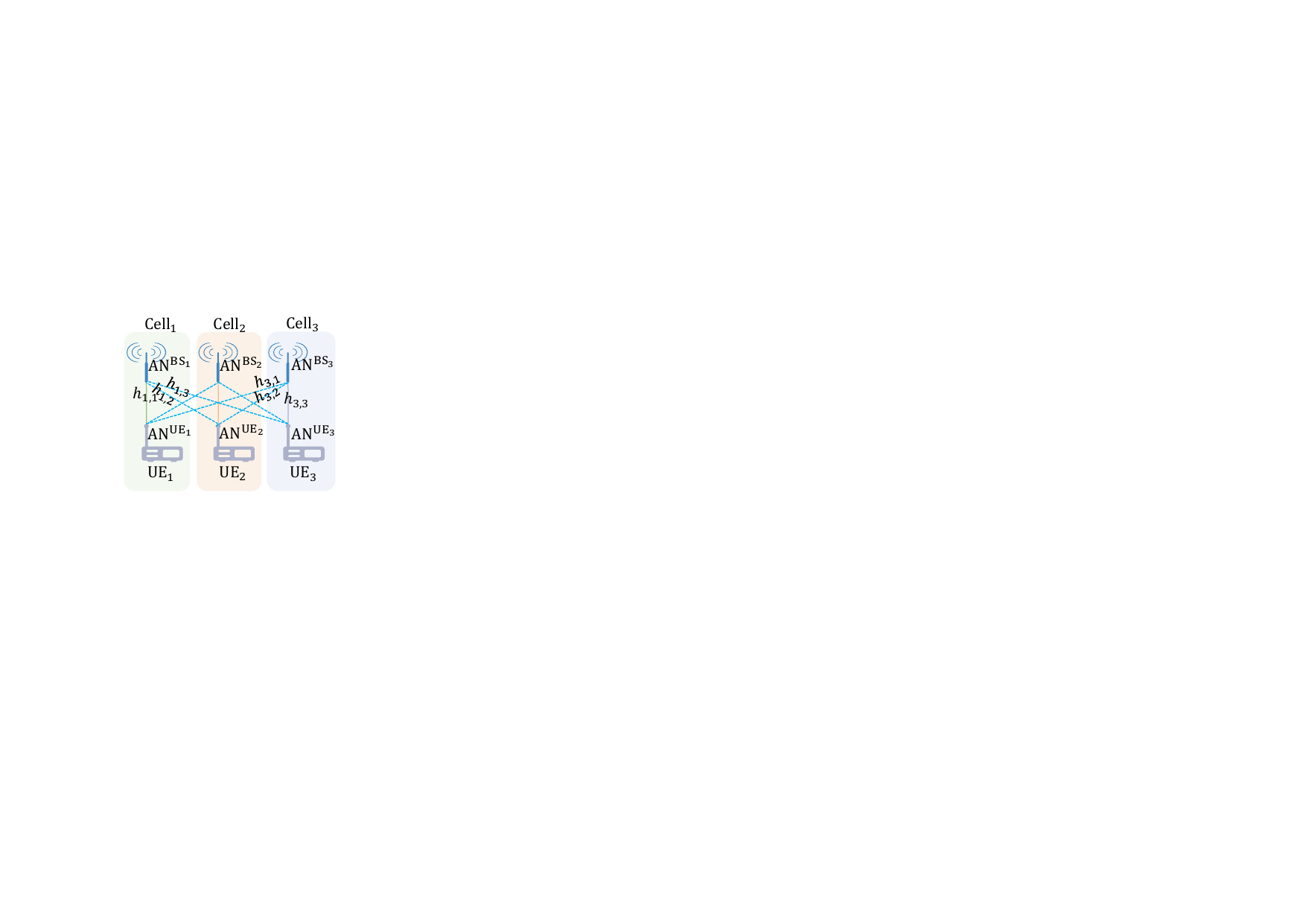}
}
\subfigure[CB and multi-cell power allocation, $N_\mathsf{t}=2,M=2,K=2$.] {
\label{fig:graph_CB}
\includegraphics[width=0.21\linewidth]{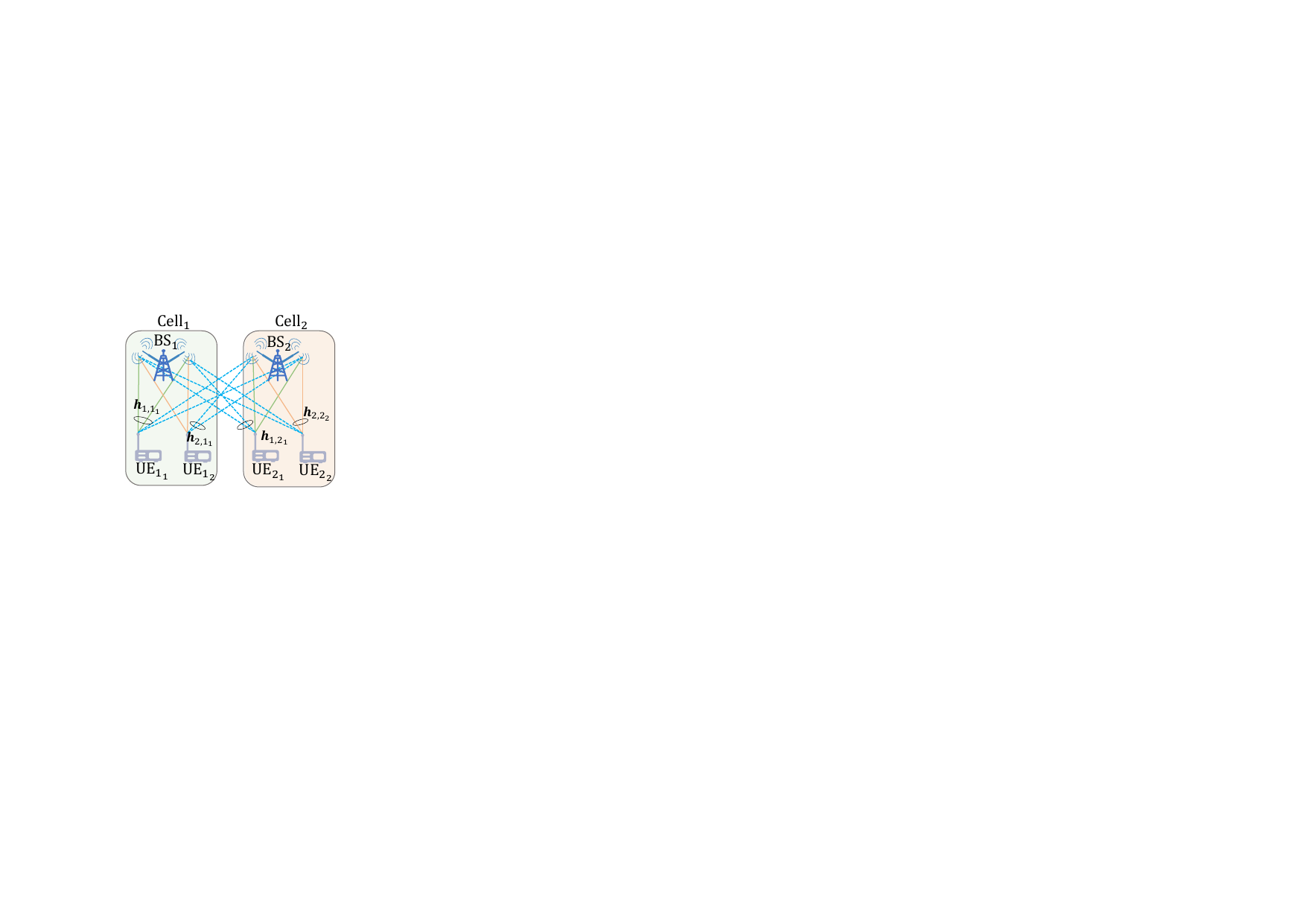}
}\vspace{-2mm}
\caption{Several wireless systems. Boxes indicate the subsets in a nested set. In (c) and (d), solid and dashed lines mean the desired and interference channels.}
\label{fig: graphs}\vspace{-4mm}
\end{figure*}

\subsubsection{MU-MIMO Precoding (Partial-nested 2D-PE Property)}
For a single cell system (i.e., $M=1$) where the BS serves $K$ multi-antenna UEs, Problem $\mathrm{P1}$ reduces to optimizing precoding in a MU-MIMO system, as illustrated in Fig. \ref{fig:graph_MIMO}.

 In this system setting, $\mathbf{H} = [\mathbf{H}_1,\cdots,\mathbf{H}_K]  \in \mathbb{C}^{N_\mathsf{t} \times KN_{\mathsf{r}}}$, where $ \mathbf{H}_k = [\mathbf{h}_{k_1},\cdots,\mathbf{h}_{k_{N_{\mathsf{r}}}}] \in \mathbb{C}^{N_\mathsf{t} \times N_{\mathsf{r}}}$ is the channel matrix of $\text{UE}_k$, and $\mathbf{h}_{k_i} \in \mathbb{C}^{N_\mathsf{t}}$ is the channel vector of the $i$-th antenna of $\text{UE}_k$ (denoted as $\text{AN}_{i}^{\mathtt{UE}_{k}}$). $\mathbf{V} = [\mathbf{V}_1,\cdots,\mathbf{V}_K] \in \mathbb{C}^{N_\mathsf{t} \times KN_{\mathsf{r}}}$, and $\mathbf{V}_k = [\mathbf{v}_{k_1},\cdots,\mathbf{v}_{k_{N_\mathsf{r}}}] \in \mathbb{C}^{N_\mathsf{t} \times N_{\mathsf{r}}}$ is the precoding matrix of $\text{UE}_{k}$.

By examining with the method in \cite{liu2023multidimensional}, the $\text{AN}^{\tt BS}$ set is a normal set and the $\text{AN}^{\tt UE}$ set is a nested set as follows,

$\text{AN}^{\tt BS}$ set: \{$\text{AN}_1^{\tt BS}, \cdots,\text{AN}_{N_{\mathsf{t}}}^{\tt BS}$\},

$\text{AN}^{\tt UE}$ set: $\{\{\text{AN}_1^{\mathtt{UE}_1}\!,\ldots\!,\text{AN}_{N_{\mathsf{r}}}^{\mathtt{UE}_1}\},\ldots\!,\{\text{AN}_1^{\mathtt{UE}_K}\!,\ldots\!,\!\text{AN}_{N_{\mathsf{r}}}^{\mathtt{UE}_K}\}\}$.

Therefore, the MU-MIMO precoding policy has the partial-nested 2D-PE property in \eqref{eq:pre_partial_nested_2d_pe}.

\subsubsection{Power Control (Joint 2D-PE Property)}
When considering $M$ cells where each BS equipped with a single antenna serves only a single-antenna UE (i.e., $N_{\mathsf{t}}=1, N_{\mathsf{r}}=1, K=1$), Problem $\mathrm{P1}$ reduces to a power control problem (say for D2D communications), as illustrated in Fig. \ref{fig:graph_PC}.

In this setting, $\mathbf{H} = [h_{i,j}] \in \mathbb{C}^{M \times M}$, where $h_{i,i}$ is the desired channel from $\text{BS}_i$ to the UE in $\text{Cell}_i$ (denoted as $\text{UE}_i$), and $h_{i,j}\,\,(i\neq j)$ is the interference channel from $\text{BS}_i$ to $\text{UE}_j$. $\mathbf{V} = \mathrm{diag}(v_1,\cdots,v_M) \in \mathbb{R}^{M\times M}$, where $v_i$ represents the transmit power of the $\text{BS}_i$. Denote the AN at $\text{BS}_i$ and $\text{UE}_i$ as $\text{AN}^{\mathtt{BS}_i}$ and $\text{AN}^{\mathtt{UE}_i}$, respectively. The $\text{AN}^{\tt BS}$ set (i.e., the BS set) and $\text{AN}^{\tt UE}$ set  (i.e., the UE set) are constituted as,

    $\text{AN}^{\tt BS}$ set: \{$\text{AN}^{\mathtt{BS}_1}, \cdots,\text{AN}^{\mathtt{BS}_M}$\}, (i.e., \{$\text{BS}_1, \cdots,\text{BS}_M$\}),

    $\text{AN}^{\tt UE}$ set: \{$\text{AN}^{\mathtt{UE}_1}, \cdots,\text{AN}^{\mathtt{UE}_M}$\} (i.e., \{$\text{UE}_1, \cdots,\text{UE}_M$\}).

The ANs in the $\text{AN}^{\tt BS}$ set and $\text{AN}^{\tt UE}$ set should be permuted jointly. Thereby, the power control policy is with the joint 2D-PE property in \eqref{eq:pre_joint_2d_pe} \cite{shenyifei}.

\subsubsection{CB and Power Allocation (Nested 2D-PE Properties)}\label{CBandPC}
When considering $M$ cells where each BS with $N_{\mathsf{t}}$ antennas serves $K$ single-antenna UEs (i.e., $N_{\mathsf{r}}=1$), Problem $\mathrm{P1}$ degenerates into the optimization of CB in multi-cell MU-MISO system, as illustrated in Fig. \ref{fig:graph_CB}, where $\mathbf{h}_{i,j_k} \in \mathbb{C}^{N_{\mathsf{t}}}$ is the channel from $\text{BS}_i$ to $\text{UE}_{j_k}$.

Denote the AN at $\text{UE}_{j_k}$ as $\text{AN}^{\mathtt{UE}_{j_k}}$. By using the method in \cite{liu2023multidimensional}, it can be verified that both $\text{AN}^{\tt BS}$ set and $\text{AN}^{\tt UE}$ set (i.e., the UE set) are nested sets as follows,

\noindent$\text{AN}^{\tt BS}$ set: $\{\!\{\text{AN}_1^{\mathtt{BS}_1}, \!\cdots\!,\text{AN}_{N_{\mathsf{t}}}^{\mathtt{BS}_1}\},\!\cdots\!,\{\text{AN}_1^{\mathtt{BS}_M}, \!\cdots\!,\text{AN}_{N_{\mathsf{t}}}^{\mathtt{BS}_M}\}\!\}$,

\noindent$\text{AN}^{\tt UE}$ set: $\{\!\{\text{AN}^{\tt UE_{1_1}}, \!\cdots\!,\!\text{AN}^{\mathtt{UE}_{1_K}}\},\!\!\cdots\!\!,\{\text{AN}^{\mathtt{UE}_{M_1}},\! \!\cdots\!,\!\text{AN}^{\mathtt{UE}_{M_K}}\}\!\}$ (i.e., $\{\!\{\text{UE}_{1_1}, \!\cdots\!,\!\text{UE}_{1_K}\},\!\!\cdots\!\!,\{\text{UE}_{M_1}, \!\cdots\!,\!\text{UE}_{M_K}\}\!\}$),\\
and the subsets in $\text{AN}^{\tt BS}$ set should be jointly permuted with the subsets in $\text{AN}^{\tt UE}$ set.

Hence, the CB policy has the nested partial-joint 2D-property in \eqref{eq:pre_nested_partial_joint_2d_pe}, where $ \mathbf{\Omega}_{\mathsf{A}}$ and $ \mathbf{\Omega}_{\mathsf{B}}$ correspond to the permutations of the $\text{AN}^{\tt BS}$ set and $\text{AN}^{\tt UE}$ set, $\mathbf{\Pi}_{\mathsf{A,sub}} = \mathbf{\Pi}_{\mathsf{B,sub}}$ jointly permutes the BS and UEs in each cell, and $\mathbf{\Pi}_{\mathsf{A},i}$ and $\mathbf{\Pi}_{\mathsf{B},i}$ permute the ANs at $\text{BS}_i$ and the UEs in $\text{Cell}_{i}$, respectively.

When the channel matrix is replaced by an equivalent channel matrix after beamforming,
Problem $\mathrm{P1}$ degenerates into a multi-cell multi-user power allocation problem.  In this setting, $\mathbf{V}_m = \mathrm{diag}(v_{m_1}, \cdots,v_{m_K}) \in \mathbb{R}^{K \times K}$ is a power allocation matrix at $\text{BS}_m$, where $v_{m_k}$ is the power allocated to $\text{UE}_{m_k}$. This problem also consists of the two nested sets, while both the subsets and the elements in the subsets of the two sets should be jointly permuted. Hence, the power allocation policy satisfies the nested joint 2D-PE property in \eqref{eq:pre_nested_joint_2d_pe}  \cite{guo2022het_pc}.

\vspace{-1mm}\begin{remark}
For SE-maximization MU-MISO precoding, the optimal solution has the following structure \cite{Björnson},
\begin{equation}
\mathbf{V}^*=\left(\mathbf{I}+\frac{1}{\sigma_0^2}\mathbf{H}\boldsymbol{\Lambda}\mathbf{H}^\mathsf{H}\right)^{-1}\mathbf{H}\mathbf{P}^\frac{1}{2}
\label{eq: duality}
\end{equation}
\noindent where $\mathbf{P}=\mathrm{diag}(p_1/||(\mathbf{I}+\frac{1}{\sigma_0^2}\mathbf{H}\boldsymbol{\Lambda}\mathbf{H}^\mathsf{H})^{-1}\mathbf{h}_1||^2,\cdots,$ $p_K/||(\mathbf{I}+\frac{1}{\sigma_0^2}\mathbf{H}\boldsymbol{\Lambda}\mathbf{H}^\mathsf{H})^{-1}\mathbf{h}_K||^2)$, $\boldsymbol{\Lambda}=\mathrm{diag}(\lambda_1,\cdots,\lambda_K)$, $p_k$ and $\lambda_k$ can be viewed as the allocated power to the $\text{UE}_k$, which satisfy $\sum_{k=1}^K\lambda_k=\sum_{k=1}^K{p_k}=P_t, p_k>0,\lambda_k>0, \forall k$.

With this structure, one can first learn the mapping from $\mathbf{H}$ to $\mathbf{P}$ and $\mathbf{\Lambda}$, which is 1D-PE to UEs and 1D-PI to ANs, and then use \eqref{eq: duality} to recover the precoding matrix $\mathbf{V}$. Such a model-driven deep learning has been widely used to optimize the SE-maximization MU-MISO precoding \cite{HybridBeamforming_1D-PEGNN2024, DaiLL_LLMMultitask, TransformerIndirect}.
\end{remark}\vspace{-1mm}

\vspace{-2mm}\begin{remark}
There exist wireless policies with more sets. For example,
consider the precoding policy in wideband MU-MISO systems with multiple resource blocks (RBs). In addition to the $\text{AN}^{\tt BS}$ and $\text{AN}^{\tt UE}$ sets, the RBs also constitute a set \cite{liu2023multidimensional}.
Hence, the wideband MU-MISO precoding policy has the ind. 3D-PE property in \eqref{eq:pre_3d_pe}.
\end{remark}\vspace{-1mm}

\vspace{-3mm}
\section{Recaping 1D-Former and 2D-Former}\label{sec:1D-2D-Former}\vspace{-1mm}
To show how the PE knowledge can be incorporated into Transformer, we recap the 1D- and 2D-Formers designed in \cite{Gformer} for learning the MU-MISO precoding policy, which satisfy the 1D-PE and ind. 2D-PE properties, respectively.

\vspace{-3mm} \subsection{Tokens and Representations}\vspace{-0.5mm}

  The input of  1D- or 2D-Former consists of multiple tokens, each with an associated representation.
The design of tokens and representations is crucial for the learning performance and size-generalizability of Transformer-based DNNs \cite{Gformer}.

 For learning the MU-MISO precoding policy, each UE (say UE$_k$) is designed as a token, whose representation is the channel vector of the UE denoted as $\mathbf{d}_k^{(0)} = [\mathsf{Re}(\mathbf{h}_k^{\mathsf{T}}),\mathsf{Im}(\mathbf{h}_k^{\mathsf{T}})]^{\mathsf{T}} \in \mathbb{R}^{2N_\mathsf{t}}$. In this way, the attention mechanism can reflect MUI, thereby improving the learning performance and enabling the generalizability to the number of UEs.\footnote{If each token is designed as a $\text{AN}^{\tt BS}$, then both the 1D- and 2D-Formers perform worse, and both can be generalized to the number of ANs but are not generalizable to the number of UEs\cite{Gformer}.}

\vspace{-3mm} \subsection{Architecture and Four Modules}\vspace{-0.5mm}
   Both the 1D- and 2D-Formers are with the encoder-only architecture, which omit the decoder and positional encoding.

   The encoder consists of $L$ layers, and each layer is composed of an attention sub-layer (ATT) and a feed-forward network sub-layer (FFN).
   In the $\ell$-th layer, the  representation of the $k$-th token is updated as follows,
\vspace{-2mm}
\begin{align}
    &\textbf{ATT:} \hspace{3mm} \mathbf{c}_{k}^{(\ell)} = \sum_{i=1}^{K} \underbrace{\Big(\mathbf{d}_{k}^{(\ell-1)\mathsf{T}}\big(\mathbf{U}^{\mathsf{K}} \mathbf{d}_{i}^{(\ell-1)}\big)\Big)}_{\alpha_{ki}} \mathbf{U}^{\mathsf{V}}\mathbf{d}_{i}^{(\ell-1)} \label{eq: vanilla-Trans-attention} \\
    &\textbf{FFN:} \hspace{3mm} \mathbf{d}_k^{(\ell)} = \sigma \Big(\mathbf{U}^{\mathsf{F}} \big(\mathbf{d}_{k}^{(\ell-1)} + \mathbf{c}_{k}^{(\ell)}\big) \Big) \label{eq: vanilla-Trans-FFN}
\end{align}
where $\mathbf{d}^{(\ell)}_k \in \mathbb{R}^{J^{(\ell)}N_\mathsf{t}}$, $J^{(\ell)}$ is a hyper-parameter, $\mathbf{U}^{\mathsf{K}}\!,\! \mathbf{U}^{\mathsf{V}} \!\in\! \mathbb{R}^{J^{(\ell-1)}N_\mathsf{t}\times J^{(\ell-1)}N_\mathsf{t}}$ and $\mathbf{U}^{\mathsf{F}}\! \in \!\mathbb{R}^{J^{(\ell)}N_\mathsf{t} \times J^{(\ell-1)}N_\mathsf{t}}$ are trainable weight matrices, and $\sigma(\cdot)$ is an activation function.

$\alpha_{ki} \triangleq \Big(\mathbf{d}_{k}^{(\ell-1)\mathsf{T}}\big(\mathbf{U}^{\mathsf{K}} \mathbf{d}_{i}^{(\ell-1)}\big)\Big)$ is the attention score.
When updating $\mathbf{d}^{(\ell-1)}_k$, the attention scores between $\mathbf{d}^{(\ell-1)}_k$ and all token representations $\mathbf{d}^{(\ell-1)}_i$ ($i=1,...,K$) are computed to capture their dependencies. This mechanism is referred to as \emph{global attention}, which can reflect MUI.

\vspace{-4.5mm}\subsection{Parameter-sharing}\vspace{-1mm}
The 1D- and 2D-Formers differ in the parameter-sharing in order for satisfying different PE properties.

For the 1D-Former, the weight matrices
 are identical (i.e., shared) for all tokens, but these matrices themselves have no structure. In this way, the 1D-Former is equivariant to the permutation of UEs, i.e., satisfying the 1D-PE property in \eqref{eq:pre_1d_pe}. We denote the ATT and FFN with unstructured weight matrices as ``{\bf ATT}",  and ``{\bf FFN}", respectively. Both are modules with 1D-PE property due to the parameter sharing across tokens.

By further sharing trainable parameters based on the 1D-Former, the 2D-Former satisfies the property in \eqref{eq:pre_2d_pe}. Specifically, all the weight matrices in each layer are designed with identical diagonal blocks and identical off-diagonal blocks.
Take $\mathbf{U}^{\mathsf{K}}$ as an example, it has the following structure,\vspace{-1.5mm}

\begin{small}
\begin{equation}\label{eq:matrix_structure}
\mathbf{U}^{\mathsf{K}} = \begin{bmatrix}
    \mathbf{U}^{\mathsf{K}}_1 & \mathbf{U}^{\mathsf{K}}_2 &\mathbf{U}^{\mathsf{K}}_2 \\
    \mathbf{U}^{\mathsf{K}}_2 &\mathbf{U}^{\mathsf{K}}_1 & \mathbf{U}^{\mathsf{K}}_2 \\
    \mathbf{U}^{\mathsf{K}}_2 & \mathbf{U}^{\mathsf{K}}_2 &\mathbf{U}^{\mathsf{K}}_1 \\
\end{bmatrix}
\end{equation}
\end{small}

\vspace{-1mm}\noindent where $\mathbf{U}_1^{\mathsf{K}},\mathbf{U}_2^{\mathsf{K}} \in \mathbb{R}^{J^{(\ell)}\times J^{(\ell)}}$ are trainable parameters whose dimensions are independent of $K$ or $N_\mathsf{t}$.

We denote the ATT and FFN whose weight matrices are with the structure in \eqref{eq:matrix_structure} as ``{\bf ATT (P.S.)}",  and ``{\bf FFN (P.S.)}", respectively, both are modules with the ind. 2D-PE property.

\vspace{-3mm}
\section{Specific PE-Formers and PE-MoFormer}\label{sec:methodology}\vspace{-1mm}
To facilitate the design of cross-task learning, we analyze the hypothesis spaces of DNNs with different PE properties. Then, we design three modules respectively satisfying the properties in \eqref{eq:pre_nested_1d_pe}, \eqref{eq:pre_partial_nested_2d_pe}, and \eqref{eq:pre_joint_2d_pe}. We show that these modules and the four modules in section \ref{sec:1D-2D-Former} can serve as the building blocks, which can be composed into specific PE-Formers for learning the policies in section \ref{w-policies}, each ought to be trained for the policy with a particular PE property. Based on the insight from the design rule of specific PE-Formers, we propose a PE-MoFormer by selecting, assembling, and reusing the modules, which is a single DNN pre-trained for learning multiple wireless policies with matched PE properties.

\vspace{-3mm}\subsection{Hypothesis Spaces of DNNs}\vspace{-0.5mm}
The function family that a DNN (or one of its layers) can represent is called the \emph{hypothesis space} of the DNN (or the layer). For example, the hypothesis space of an FNN is the family of all continuous functions, since FNN can represent any continuous function \cite{universalapp}.

Denote $\mathcal{F}$ as a family of functions that have a type of PE property.
For example, if the input-output relation of a DNN has the 1D-PE property, then the DNN can only represent the functions with the property, i.e., its hypothesis space is $\mathcal{F}_{\text{1D-PE}} \triangleq \{f_\Theta(\cdot)\,|\,f_\Theta(\cdot) \text{ has the 1D-PE property} \}$, where $\Theta$ denotes the trainable parameters in the DNN.

\subsubsection{Hypothesis Spaces of DNNs with Different PE Properties}\label{HYSPDNN}
From \eqref{eq:pre_1d_pe} and \eqref{eq:pre_nested_1d_pe}, we can see that the nested 1D-PE property is a special case of the 1D-PE property. This is because the functions exhibiting the 1D-PE property must satisfy the nested 1D-PE property, due to the fact that the permutation matrix $\mathbf{\Pi}$ in \eqref{eq:pre_1d_pe} includes all possible permutations (including those represented by $\mathbf{\Omega}$ in \eqref{eq:pre_nested_1d_pe}). However, the functions exhibiting the nested 1D-PE property do not necessarily satisfy the 1D-PE property, since the nested 1D-PE property only ensures the equivariance to the permutation of subsets and the elements in each subset.
 Hence, $\mathcal{F}_{\text{1D-PE}} \subsetneq \mathcal{F}_{\text{nested 1D-PE}}$.

By letting $\mathbf{\Pi}_\mathsf{B} = \mathbf{I}$, it can be seen that the 1D-PE property in \eqref{eq:pre_1d_pe} is a special case of the ind. 2D-PE property in \eqref{eq:pre_2d_pe}. This means that the functions satisfying the ind. 2D-PE property must satisfy the 1D-PE property, but the converse statement is not true. Hence, $\mathcal{F}_{\text{ind. 2D-PE}} \subsetneq \mathcal{F}_{\text{1D-PE}}$.

Using similar analyses, the relation among the hypothesis spaces of the DNNs satisfying the PE properties in section \ref{sec:PEs} can be summarized as follows, which is shown in Fig. \ref{fig:hypo}.

\begin{enumerate}
\item  $\mathcal{F}_{\text{ind. 3D-PE}} \subsetneq \mathcal{F}_{\text{ind. 2D-PE}} \subsetneq \mathcal{F}_{\text{partial-nested 2D-PE}} \\
    \subsetneq \mathcal{F}_{\text{nested ind. 2D-PE}} \subsetneq \mathcal{F}_{\text{1D-PE}} \subsetneq \mathcal{F}_{\text{nested 1D-PE}}$

\item  $\mathcal{F}_{\text{ind. 3D-PE}} \subsetneq \mathcal{F}_{\text{ind. 2D-PE}} \subsetneq \mathcal{F}_{\text{joint 2D-PE}} \\
    \subsetneq \mathcal{F}_{\text{nested partial-joint 2D-PE}} \subsetneq \mathcal{F}_{\text{nested joint 2D-PE}}$

\item $\mathcal{F}_{\text{nested ind. 2D-PE}} \subsetneq \mathcal{F}_{\text{nested partial-joint 2D-PE}}$

\end{enumerate}

\begin{figure}
    \centering
    \includegraphics[width=0.75\linewidth]{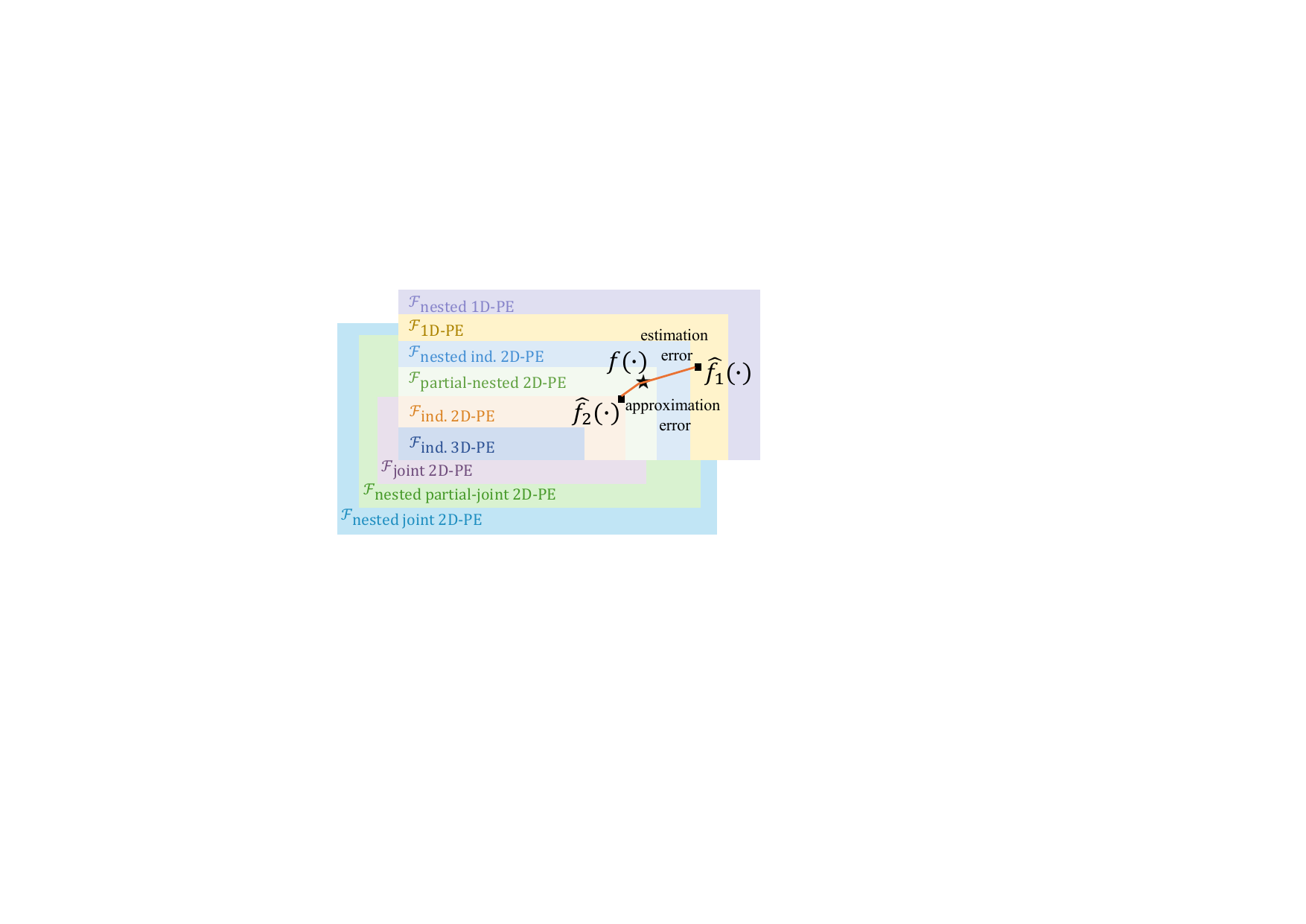}\vspace{-3mm}
    \caption{Relation of hypothesis spaces of DNNs with different PE properties.}
    \label{fig:hypo}\vspace{-4mm}
\end{figure}

When designing a DNN to learn a function with a type of PE property, appropriately constraining its hypothesis space by exploiting the property can enhance its learning performance and learning efficiency \cite{zhao2022understanding}.

For instance, when learning a function $f(\cdot)$ with the partial-nested 2D-PE property, an efficient DNN should be designed such that its hypothesis space is $\mathcal{F}_{\text{partial-nested 2D-PE}}$. If the hypothesis space of the DNN  (e.g., $ \mathcal{F}_{\text{1D-PE}} $) is \emph{larger} than $ \mathcal{F}_{\text{partial-nested 2D-PE}} $, then more training samples are required to well-approximate the function. With limited samples, the learning performance may degrade due to the \emph{estimation error} \cite{zhao2022understanding}, which is ``distance" between $f(\cdot)$ and $\hat{f_1}(\cdot)$ in Fig. \ref{fig:hypo}. Conversely, if the hypothesis space  (e.g., $ \mathcal{F}_{\text{ind. 2D-PE}} $) is \emph{smaller} than $ \mathcal{F}_{\text{partial-nested 2D-PE}} $ such that $f(\cdot)$ is not in the hypothesis space, the DNN can only learn a function as close as possible to $f(\cdot)$ in $\mathcal{F}_{\text{ind. 2D-PE}}$ (i.e., $\hat{f}_2(\cdot)$ in Fig. \ref{fig:hypo}). Then, the approximation error (i.e., the ``distance" between $f(\cdot)$ and  $\hat{f}_2(\cdot)$) has to be mitigated by enlarging the DNN scale (i.e., using a wider or deeper DNN). Both kinds of errors degrade the performance or the learning efficiency of DNNs.

\subsubsection{Hypothesis Space of a DNN Consisting of Layers with Different PE Properties}\label{HYSPDNNL}
When the layers in a DNN satisfy different PE properties (i.e., have different hypothesis spaces), \emph{the hypothesis space of the DNN depends on the layer with the largest hypothesis space}. Formally, consider a two-layer DNN whose input-output relations of the first and second layers are $g_{\Theta_1}(\cdot) \in \mathcal{F}_1$, $f_{\Theta_2}(\cdot) \in \mathcal{F}_2$, where $\mathcal{F}_1$ and $\mathcal{F}_2$ are respectively the hypothesis spaces of the two layers.

\begin{proposition}\label{prop:compound_func}
    If $\mathcal{F}_1$ is $\mathcal{F}_{\text{ind. 2D-PE}}$ and $\mathcal{F}_1 \subsetneq \mathcal{F}_2$, then $ f_{\Theta_2}(g_{\Theta_1}(\cdot)) \in \mathcal{F}_2 $ but $ f_{\Theta_2}(g_{\Theta_1}(\cdot)) \in \mathcal{F}_1 $ does not necessarily hold. Conversely, if $\mathcal{F}_2$ is $\mathcal{F}_{\text{ind. 2D-PE}}$ and $\mathcal{F}_2 \subsetneq \mathcal{F}_1$, $ f_{\Theta_2}(g_{\Theta_1}(\cdot)) \in \mathcal{F}_1 $ but $ f_{\Theta_2}(g_{\Theta_1}(\cdot)) \in \mathcal{F}_2 $ may not always hold.
\end{proposition}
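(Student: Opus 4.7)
The plan is to view each PE property as equivariance under a concrete set of permutations, so that the strict inclusion of hypothesis spaces translates into a strict inclusion of permutation groups in the reverse direction. Writing $\mathcal{G}_i$ for the group of permutations whose equivariance defines $\mathcal{F}_i$, the hierarchy summarised in Fig.~\ref{fig:hypo} gives $\mathcal{F}_1\subsetneq\mathcal{F}_2$ if and only if $\mathcal{G}_2\subsetneq\mathcal{G}_1$, because imposing equivariance to more permutations yields a smaller function family. With this translation in hand, the proposition reduces to the elementary fact that composition of equivariant maps is equivariant only under the \emph{intersection} of their symmetry groups.

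For the forward inclusion in the first case, I would pick an arbitrary $\Pi\in\mathcal{G}_2$ and use $\mathcal{G}_2\subsetneq\mathcal{G}_1$ to conclude that $\Pi\in\mathcal{G}_1$, so both $g_{\Theta_1}$ and $f_{\Theta_2}$ commute with $\Pi$. Then
\begin{equation*}
f_{\Theta_2}(g_{\Theta_1}(\Pi\mathbf{X}))=f_{\Theta_2}(\Pi g_{\Theta_1}(\mathbf{X}))=\Pi f_{\Theta_2}(g_{\Theta_1}(\mathbf{X})),
\end{equation*}
which is exactly $f_{\Theta_2}\circ g_{\Theta_1}\in\mathcal{F}_2$. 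The converse case, in which $f_{\Theta_2}$ is the more constrained layer and $g_{\Theta_1}$ the less constrained one, follows by the same two-line argument with the roles of $\mathcal{G}_1$ and $\mathcal{G}_2$ swapped: every permutation honoured by $f_{\Theta_2}$ is automatically honoured by $g_{\Theta_1}$, so the composite inherits the equivariance group of the \emph{less constrained} layer.

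For the ``not necessarily'' halves, I would produce a simple counterexample using the strictness of the inclusion. In the first case, strictness supplies some $\Pi_0\in\mathcal{G}_1\setminus\mathcal{G}_2$; taking $g_{\Theta_1}$ to be the identity (which lies in $\mathcal{F}_1$) and $f_{\Theta_2}\in\mathcal{F}_2$ to be any instance that fails to commute with $\Pi_0$, the composite breaks equivariance under $\Pi_0$ and hence lies in $\mathcal{F}_2\setminus\mathcal{F}_1$. The fact that such an $f_{\Theta_2}$ exists is precisely the content of $\mathcal{F}_1\subsetneq\mathcal{F}_2$. The symmetric construction with the identity moved to the outer layer handles the converse.

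The main obstacle is not the composition calculation itself but formulating it uniformly across the PE flavours of Section~\ref{sec:PEs}, since 1D versus 2D, independent versus joint, and normal versus nested properties employ different algebraic forms for their permutations (Kronecker products, block-diagonal structure, the coupling $\mathbf{\Pi}_\mathsf{A}=\mathbf{\Pi}_\mathsf{B}$, etc.). Care is needed to verify, for each pair allowed by the hypothesis $\mathcal{F}_1=\mathcal{F}_{\text{ind.\ 2D-PE}}$ (or $\mathcal{F}_2=\mathcal{F}_{\text{ind.\ 2D-PE}}$), that the containment $\mathcal{G}_2\subsetneq\mathcal{G}_1$ is genuinely strict so that the required witness $\Pi_0$ exists; once those inclusions are checked against Fig.~\ref{fig:hypo}, the proof assembles cleanly.
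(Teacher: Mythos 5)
Your proposal is correct in substance and reaches the same conclusion by a more abstract route than the paper. The paper proves the proposition by explicit case analysis: for each concrete choice of $\mathcal{F}_2$ (1D-PE, joint 2D-PE, partial-nested 2D-PE, \dots) it writes out the matrix identities, uses the fact that the ind.~2D-PE layer $g$ commutes with \emph{every} permutation pair $(\mathbf{\Pi}_\mathsf{A},\mathbf{\Pi}_\mathsf{B})$ --- hence in particular with the restricted permutations defining $\mathcal{F}_2$ --- and then invokes the property of $f$. Your group-theoretic packaging (each $\mathcal{F}_i$ is the family of maps equivariant under a permutation group $\mathcal{G}_i$, inclusions of families reverse inclusions of groups, and a composite is equivariant under the intersection $\mathcal{G}_1\cap\mathcal{G}_2$) captures exactly this mechanism but states it once, uniformly, instead of per flavour; the price, which you correctly identify, is that one must still check that every property in Fig.~\ref{fig:hypo} really is equivariance under a genuine subgroup of the product group acting by $(\mathbf{\Pi}_\mathsf{A},\mathbf{\Pi}_\mathsf{B})\cdot\mathbf{X}=\mathbf{\Pi}_\mathsf{A}^\mathsf{T}\mathbf{X}\mathbf{\Pi}_\mathsf{B}$ (diagonal subgroup for ``joint'', wreath-product subgroups for ``nested'', etc.), which is where the paper's case analysis reappears. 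Your counterexample for the ``not necessarily'' halves --- take the inner (or outer) layer to be the identity, which lies in $\mathcal{F}_{\text{ind.~2D-PE}}$, and pair it with a witness of the strict inclusion --- is actually cleaner than the paper's argument, which asserts generically that $f(\mathbf{\Pi}_\mathsf{A}^\mathsf{T}g(\mathbf{X})\mathbf{\Pi}_\mathsf{B})\neq\mathbf{\Pi}_\mathsf{A}^\mathsf{T}f(g(\mathbf{X}))\mathbf{\Pi}_\mathsf{B}$ without worrying about the range of $g$.

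One sentence in your converse case is stated backwards: you write that ``every permutation honoured by $f_{\Theta_2}$ is automatically honoured by $g_{\Theta_1}$,'' but when $\mathcal{F}_2\subsetneq\mathcal{F}_1$ we have $\mathcal{G}_1\subsetneq\mathcal{G}_2$, so it is every permutation honoured by $g$ that is automatically honoured by $f$; you then pick $\mathbf{\Pi}\in\mathcal{G}_1$, pass it through $g$ by equivariance, and push it through $f$ because $\mathbf{\Pi}$ also lies in $\mathcal{G}_2$. Your stated conclusion (the composite inherits the group of the less constrained layer, i.e., lies in $\mathcal{F}_1$) and your opening principle about intersections are both correct, so this is a slip of phrasing rather than a gap, but as written that sentence would not support the deduction.
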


\vspace{-2mm}\begin{proof}
    See Appendix~\ref{proof:compound_func}.
\end{proof}
We can prove that Proposition \ref{prop:compound_func} still holds without the assumption of $\mathcal{F}_1$ (or $\mathcal{F}_2$) being $\mathcal{F}_{\text{ind. 2D-PE}}$, by using similar derivations as in Appendix~\ref{proof:compound_func}.

\vspace{-2mm}\subsection{Design of Three Modules}\vspace{-1mm}
From the analyses in section \ref{HYSPDNN}, we know that when learning policies with joint PE and nested PE properties in \eqref{eq:pre_nested_1d_pe}, \eqref{eq:pre_joint_2d_pe}-\eqref{eq:pre_nested_joint_2d_pe}, the 1D-Former and 2D-Former either requires a large number of training samples for reducing the estimation error or is hard to achieve satisfactory performance due to approximation error.
In order to design DNNs with matched PE properties to such policies, we propose another three modules in the sequel.

\subsubsection{Nested Attention Sub-layers}
 We first design a nested attention sub-layer (NATT) of the $\ell$-th layer that satisfies the nested 1D-PE property in \eqref{eq:pre_nested_1d_pe}, which is a module denoted as {\bf ``NATT''}. Then, we design the structure of the weight matrices in the sub-layer, such that the NATT with parameter-sharing can satisfy the partial-nested 2D-PE property in \eqref{eq:pre_partial_nested_2d_pe}, which is another module denoted as {\bf ``NATT (P.S.)}".

 Consider a nested set with $N_{\mathsf{sub}}$ subsets where each subset consists of $N_{\mathsf{s}}$ elements. We design each token as an element in the nested set, and denote the token representation of the $n$-th element in the $k$-th subset as $\mathbf{d}_{k_n}^{(\ell)}$.

The key to designing a module with the nested 1D-PE property lies in distinguishing the tokens within the same subset and across different subsets. If representations of all the tokens in the nested set are updated in the same way (say by using the global attention mechanism as in ATT), the input-output relation of the sub-layer will be equivariant to the permutation of elements not only in the same subset but also across different subsets. Then, the hypothesis space of the sub-layer is  $\mathcal{F}_{\text{1D-PE}}$, which is smaller than $\mathcal{F}_{\text{nested 1D-PE}}$.

To differentiate the tokens within the same subset and across different subsets, one can apply an existing approach of designing parameter-sharing for weight matrices \cite{liu2023multidimensional}. However, unlike the ind. 2D-PE property, designing parameter-sharing for satisfying the nested PE property is much more tedious.

Instead, we propose a NATT with both local and global attention mechanisms. The \emph{local attention} only computes the scores of the token representations in the same subset (e.g., $\mathbf{d}_{k_n}^{(\ell-1)}$ and $\mathbf{d}_{k_i}^{(\ell-1)},i=1,\cdots,N_{\mathsf{s}}$). The global attention computes the scores of all token representations. In this way, the local attention mechanism can distinguish tokens in the same subset from different subsets, and the global attention mechanism can extract the information from different subsets.

Specifically, the representation of the $n$-th token in the $k$-th subset of the $\ell$-th layer is updated as follows,\vspace{-2mm}
 \begin{equation}\label{eq:2d-nested-att}
\begin{aligned}
\textbf{NATT:} \hspace{2mm} \mathbf{c}_{k_n}^{(\ell)} &=  \underbrace{\sum_{i=1}^{N_{\mathsf{s}}} \Big(\mathbf{d}_{k_n}^{(\ell-1){\mathsf{T}}}
		\big(\mathbf{U}^{\mathsf{K}}_{\mathsf{S}} \mathbf{d}_{k_i}^{(\ell-1)}\big)\Big)\mathbf{U}^{\mathsf{V}}_\mathsf{S}\mathbf{d}_{k_i}^{(\ell-1)}}_{\mathrm{local\,\, attention  \,\, }}+\\&\quad\underbrace{\sum_{j=1}^{N_{\mathsf{sub}}}\sum_{i=1}^{N_{\mathsf{s}}} \Big(\mathbf{d}_{k_n}^{(\ell-1){\mathsf{T}}}
		\big(\mathbf{U}^{\mathsf{K}}_{\mathsf{D}} \mathbf{d}_{j_i}^{(\ell-1)}\big)\Big)\mathbf{U}^{\mathsf{V}}_\mathsf{D}\mathbf{d}_{j_i}^{(\ell-1)}}_{\mathrm{global \,\, attention}}
\end{aligned}
\end{equation}
where $\mathbf{U}^{\mathsf{K}}_{\mathsf{S}},\mathbf{U}^{\mathsf{V}}_{\mathsf{S}},\mathbf{U}^{\mathsf{K}}_{\mathsf{D}},\mathbf{U}^{\mathsf{V}}_{\mathsf{D}}$ are trainable weight matrices.

\begin{proposition}\label{prop:nested-pe}
    The input-output relation of the NATT satisfies the nested 1D-PE property in \eqref{eq:pre_nested_1d_pe}.
\end{proposition}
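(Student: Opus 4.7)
The plan is to verify equivariance of the NATT output under each of the two permutation factors that compose $\mathbf{\Omega} = (\mathbf{\Pi}_{\mathsf{sub}}\otimes\mathbf{I}_{N_{\mathsf{sub}}})\,\mathrm{diag}(\mathbf{\Pi}_{1},\ldots,\mathbf{\Pi}_{N_{\mathsf{sub}}})$ separately, and then compose them to obtain the full nested 1D-PE identity. Let $\pi_k$ denote the permutation associated with the within-subset block $\mathbf{\Pi}_k$, and $\pi_{\mathsf{sub}}$ that associated with the subset-level block $\mathbf{\Pi}_{\mathsf{sub}}$. I would substitute the relabelled inputs into the right-hand side of \eqref{eq:2d-nested-att} and show, in indexwise form, that the output at position $k_n$ after permutation equals the original output at position $(\pi_{\mathsf{sub}}(k))_{\pi_k(n)}$; rewriting this componentwise identity as a matrix equation is precisely $\mathbf{Y}\mathbf{\Omega}=f(\mathbf{X}\mathbf{\Omega})$.

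For the within-subset factor, I would set $\tilde{\mathbf{d}}_{k_n}^{(\ell-1)} = \mathbf{d}_{k_{\pi_k(n)}}^{(\ell-1)}$ and insert it into both attention terms. The local-attention sum at token $k_n$ becomes a sum over $\mathbf{d}_{k_{\pi_k(i)}}^{(\ell-1)}$ for $i=1,\ldots,N_{\mathsf{s}}$; since $\pi_k$ is a bijection of $\{1,\ldots,N_{\mathsf{s}}\}$ and the weights $\mathbf{U}^{\mathsf{K}}_{\mathsf{S}},\mathbf{U}^{\mathsf{V}}_{\mathsf{S}}$ do not depend on the summation index, the change of variable $i' = \pi_k(i)$ reduces this to the original local-attention term evaluated at token $k_{\pi_k(n)}$. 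The global-attention double sum is handled identically by re-indexing within each subset. Hence the NATT output at $k_n$ under the permuted input equals the original output at $k_{\pi_k(n)}$.

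Next I would handle the subset-level factor by substituting $\tilde{\mathbf{d}}_{k_n}^{(\ell-1)} = \mathbf{d}_{\pi_{\mathsf{sub}}(k)_n}^{(\ell-1)}$. The local attention at token $k_n$ now only touches the $k$-th subset of the permuted array, which is the $\pi_{\mathsf{sub}}(k)$-th subset of the original; since $\mathbf{U}^{\mathsf{K}}_{\mathsf{S}},\mathbf{U}^{\mathsf{V}}_{\mathsf{S}}$ are shared across all subsets, this term matches the original local attention evaluated at $\pi_{\mathsf{sub}}(k)_n$. The global-attention sum over $j$ is similarly reduced by the bijective change of variable $j' = \pi_{\mathsf{sub}}(j)$, using that $\mathbf{U}^{\mathsf{K}}_{\mathsf{D}},\mathbf{U}^{\mathsf{V}}_{\mathsf{D}}$ are independent of $j$. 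Composing with the first step yields that the output at $k_n$ under the full permutation equals the original output at $\pi_{\mathsf{sub}}(k)_{\pi_k(n)}$, which is the indexwise form of \eqref{eq:pre_nested_1d_pe}.

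The main obstacle is not any single calculation but bookkeeping: ensuring that the indexwise statement ``output at $k_n$ equals original output at $\pi_{\mathsf{sub}}(k)_{\pi_k(n)}$'' is correctly identified with right-multiplication by $\mathbf{\Omega}$ as written in \eqref{eq:pre_nested_1d_pe}, and being explicit that the argument relies essentially on all four weight matrices being shared across subsets. I would therefore flag at the outset that subset-indexed weights (e.g., $\mathbf{U}^{\mathsf{K}}_{\mathsf{S},k}$) would break equivariance under $\mathbf{\Pi}_{\mathsf{sub}}$, because the change of variable in the global-attention sum would no longer cancel.
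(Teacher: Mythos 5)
Your argument is correct, but it takes a genuinely different route from the paper. You work componentwise with the summation form of \eqref{eq:2d-nested-att}, factor $\mathbf{\Omega}$ into its within-subset and subset-level parts, and verify equivariance under each factor by a change of variables in the local and global attention sums (using that equivariance under two permutations composes, since $f(\mathbf{X}\mathbf{P}\mathbf{Q})=f(\mathbf{X}\mathbf{P})\mathbf{Q}=f(\mathbf{X})\mathbf{P}\mathbf{Q}$). The paper instead rewrites the NATT in matrix form as $F(\mathbf{D})+G(\mathbf{D})$, where the local attention $F(\cdot)$ is expressed via a Hadamard product with a block-diagonal mask $\mathbf{M}=\mathbf{I}_{\mathsf{sub}}\otimes\mathbf{1}_{N_{\mathsf{s}}}$, proves a commutation lemma $\mathbf{M}\mathbf{\Omega}=\mathbf{\Omega}\mathbf{M}$ for nested permutation matrices, and pushes $\mathbf{\Omega}$ through using the identity $\mathbf{\Pi}_1(\mathbf{A}\odot\mathbf{B})\mathbf{\Pi}_2=(\mathbf{\Pi}_1\mathbf{A}\mathbf{\Pi}_2)\odot(\mathbf{\Pi}_1\mathbf{B}\mathbf{\Pi}_2)$. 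Your approach is more elementary and makes the role of weight sharing across subsets completely explicit (your closing remark that subset-indexed weights $\mathbf{U}^{\mathsf{K}}_{\mathsf{S},k}$ would break subset-level equivariance is exactly the right observation). The paper's matrix formulation buys two things you do not get for free: the same machinery extends almost verbatim to the proof of Proposition~3, where a left-multiplication $\mathbf{\Pi}^{\mathsf{T}}$ must commute past the weight matrices inside the bilinear attention scores and an indexwise argument would have to be substantially reworked; and the companion observation $\mathbf{\Pi}\mathbf{M}\neq\mathbf{M}\mathbf{\Pi}$ immediately shows that $F(\cdot)$ is \emph{not} 1D-PE, which supports the hypothesis-space claims in Fig.~\ref{fig:hypo}. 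As a minor point, if you write this up you should fix the column-permutation convention once (which column of $\mathbf{X}\mathbf{\Omega}$ is which column of $\mathbf{X}$) and carry it consistently through the composition step, since that is where the bookkeeping you flag can silently go wrong.
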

\vspace{-2mm}\begin{proof}
    See Appendix \ref{proof:nested-pe}
\end{proof}


\vspace{-1mm}\begin{proposition}\label{prop:sharing_params}
    When $\mathbf{\Pi}\mathbf{U}^{\mathsf{K}}_{\mathsf{S}} = \mathbf{U}^{\mathsf{K}}_{\mathsf{S}}\mathbf{\Pi}, \mathbf{\Pi}\mathbf{U}^{\mathsf{V}}_{\mathsf{S}} = \mathbf{U}^{\mathsf{V}}_{\mathsf{S}}\mathbf{\Pi},\mathbf{\Pi}\mathbf{U}^{\mathsf{K}}_{\mathsf{D}} = \mathbf{U}^{\mathsf{K}}_{\mathsf{D}}\mathbf{\Pi},\mathbf{\Pi}\mathbf{U}^{\mathsf{V}}_{\mathsf{D}} = \mathbf{U}^{\mathsf{V}}_{\mathsf{D}}\mathbf{\Pi}$, the input-output relation of the {\bf NATT (P.S.)} satisfies the partial-nested 2D-PE property  in \eqref{eq:pre_partial_nested_2d_pe}.
\end{proposition}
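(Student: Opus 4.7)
The plan is to decouple the two permutations appearing in the partial-nested 2D-PE identity \eqref{eq:pre_partial_nested_2d_pe}: the normal permutation $\mathbf{\Pi}_{\mathsf{A}}\equiv\mathbf{\Pi}$ acting on the feature dimension of each token representation (the BS-antenna coordinate), and the nested permutation $\mathbf{\Omega}$ acting on the token indices (the UE antennas arranged as a nested set). Equivariance under $\mathbf{\Omega}$ is already delivered by Proposition \ref{prop:nested-pe}, so the new content is purely how the commutation hypotheses on $\mathbf{U}^{\mathsf{K}}_{\mathsf{S}},\mathbf{U}^{\mathsf{V}}_{\mathsf{S}},\mathbf{U}^{\mathsf{K}}_{\mathsf{D}},\mathbf{U}^{\mathsf{V}}_{\mathsf{D}}$ propagate through the update \eqref{eq:2d-nested-att}. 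It therefore suffices, by composition, to treat the case $\mathbf{\Omega}=\mathbf{I}$ and show that replacing every input $\mathbf{d}_{k_n}^{(\ell-1)}$ by $\tilde{\mathbf{d}}_{k_n}^{(\ell-1)}\triangleq\mathbf{\Pi}^{\mathsf{T}}\mathbf{d}_{k_n}^{(\ell-1)}$ produces the output $\mathbf{\Pi}^{\mathsf{T}}\mathbf{c}_{k_n}^{(\ell)}$.

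First I would observe that since $\mathbf{\Pi}$ is a permutation matrix, $\mathbf{\Pi}\mathbf{U}=\mathbf{U}\mathbf{\Pi}$ is equivalent to $\mathbf{U}\mathbf{\Pi}^{\mathsf{T}}=\mathbf{\Pi}^{\mathsf{T}}\mathbf{U}$. Using this together with $\mathbf{\Pi}\mathbf{\Pi}^{\mathsf{T}}=\mathbf{I}$, the scalar attention coefficient in the local term becomes
\begin{equation*}
\tilde{\mathbf{d}}_{k_n}^{(\ell-1)\mathsf{T}}\mathbf{U}^{\mathsf{K}}_{\mathsf{S}}\tilde{\mathbf{d}}_{k_i}^{(\ell-1)} = \mathbf{d}_{k_n}^{(\ell-1)\mathsf{T}}\mathbf{\Pi}\,\mathbf{\Pi}^{\mathsf{T}}\mathbf{U}^{\mathsf{K}}_{\mathsf{S}}\mathbf{d}_{k_i}^{(\ell-1)} = \mathbf{d}_{k_n}^{(\ell-1)\mathsf{T}}\mathbf{U}^{\mathsf{K}}_{\mathsf{S}}\mathbf{d}_{k_i}^{(\ell-1)},
\end{equation*}
so the attention scores are $\mathbf{\Pi}$-invariant. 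For the value part the same commutation yields $\mathbf{U}^{\mathsf{V}}_{\mathsf{S}}\tilde{\mathbf{d}}_{k_i}^{(\ell-1)} = \mathbf{\Pi}^{\mathsf{T}}\mathbf{U}^{\mathsf{V}}_{\mathsf{S}}\mathbf{d}_{k_i}^{(\ell-1)}$, so a left factor $\mathbf{\Pi}^{\mathsf{T}}$ can be pulled outside the whole local sum. The identical argument applied to the global-attention term with $\mathbf{U}^{\mathsf{K}}_{\mathsf{D}},\mathbf{U}^{\mathsf{V}}_{\mathsf{D}}$ extracts the same $\mathbf{\Pi}^{\mathsf{T}}$ factor; summing the two terms gives $\tilde{\mathbf{c}}_{k_n}^{(\ell)}=\mathbf{\Pi}^{\mathsf{T}}\mathbf{c}_{k_n}^{(\ell)}$ as desired.

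Combining this feature-side equivariance with the token-index equivariance of Proposition \ref{prop:nested-pe} (applied by letting $\mathbf{\Omega}$ also act on the indices of the $\tilde{\mathbf{d}}_{k_n}^{(\ell-1)}$) yields the full identity, which in stacked-matrix form is precisely $\mathbf{\Pi}_{\mathsf{A}}^{\mathsf{T}}\mathbf{Y}\mathbf{\Omega}=f(\mathbf{\Pi}_{\mathsf{A}}^{\mathsf{T}}\mathbf{X}\mathbf{\Omega})$. The main conceptual point — and the only nontrivial step — is recognizing that the four commutation hypotheses are exactly what is needed to let $\mathbf{\Pi}^{\mathsf{T}}$ slide past every weight matrix in both the query/key and value positions simultaneously; once that is in place the rest of the verification is bookkeeping analogous to the parameter-sharing argument for the 2D-Former recapped in Section \ref{sec:1D-2D-Former}.
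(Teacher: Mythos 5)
Your proposal is correct and follows essentially the same route as the paper's Appendix C: the commutation hypotheses are used to cancel $\mathbf{\Pi}\mathbf{\Pi}^{\mathsf{T}}$ inside the attention scores and to pull a single left factor $\mathbf{\Pi}^{\mathsf{T}}$ out of the value terms, after which the $\mathbf{\Omega}$-equivariance is inherited from the proof of Proposition~\ref{prop:nested-pe} (whose argument is independent of any structure on the weight matrices, so it applies verbatim to NATT (P.S.)). The only presentational difference is that you decouple the row action $\mathbf{\Pi}^{\mathsf{T}}$ from the column action $\mathbf{\Omega}$ and compose the two commuting equivariances explicitly, working elementwise, whereas the paper carries both permutations through a single matrix-form chain of equalities.
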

\vspace{-2mm}\begin{proof}
    See Appendix \ref{proof:sharing_params}
\end{proof}
If the matrices $\mathbf{U}^{\mathsf{K}}_{\mathsf{S}},\mathbf{U}^{\mathsf{V}}_{\mathsf{S}},\mathbf{U}^{\mathsf{K}}_{\mathsf{D}},\mathbf{U}^{\mathsf{V}}_{\mathsf{D}}$ are with the structure in \eqref{eq:matrix_structure}, the four conditions in the proposition can satisfied \cite{zaheer2017deep}.

\vspace{-1mm}\begin{remark}
 In \cite{li2024hpe}, a Hier-Former was proposed for multi-cast beamforming in MU-MISO systems,
 which adopts an encoder-decoder architecture. The encoder learns power allocation, which is recovered to the beamforming matrix by the decoder using a structure similar to \eqref{eq: duality}.
However, the Hier-Former can only satisfy the nested 1D-PE property but not the partial-nested 2D-PE property satisfied by the NATT (P.S.).


\end{remark}\vspace{-2mm}

\subsubsection{Diagonal Output Layer}\label{subsec:joint and nested PE}
In \eqref{eq:pre_joint_2d_pe}, the rows and columns of $\mathbf{Y}$ and $\mathbf{X}$ are permuted with the same matrix. It is easy to show that for any matrix $\mathbf{X}$ and any permutation matrix $\mathbf{\Pi}$, the diagonal elements of $\mathbf{\Pi}^{\mathsf{T}}\mathbf{X}\mathbf{\Pi}$ are the same as $\mathbf{X}$ but with permuted order.\footnote{For example, let $\mathbf{X} = \begin{bmatrix} a & b \\ c & d \end{bmatrix}$ and $\boldsymbol{\Pi} = \begin{bmatrix} 0 & 1 \\ 1 & 0 \end{bmatrix}$.
$\mathbf{X}$ and $\boldsymbol{\Pi} \mathbf{X} \boldsymbol{\Pi}^\mathsf{T} = \begin{bmatrix} d & c \\ b & a \end{bmatrix}$ have the same diagonal elements (i.e., $a,d$) with a reversed order.} This is not true for other PE properties such as the ind. 2D-PE property, where the diagonal elements of $\mathbf{\Pi}_{\mathsf{A}}^{\mathsf{T}}\mathbf{X}\mathbf{\Pi}_{\mathsf{B}}$ and $\mathbf{X}$ may  completely differ.

The observation suggests that the following function extracting the diagonal elements satisfies the joint 2D-PE property,\vspace{-1mm}
\begin{equation}
    \mathbf{Y} = f_{\mathsf{Diag}}(\mathbf{X}), \quad \text{where} \quad y_{i,j} =
    \begin{cases}
    x_{i,j}, & \text{if } i = j \\
    0, & \text{if } i \neq j
    \end{cases}
\end{equation}
It is not hard to prove that
$\mathbf{\Pi}^\mathsf{T}_\mathsf{A}\mathbf{Y}\mathbf{\Pi}_{\mathsf{B}} = f_{\mathsf{Diag}}(\mathbf{\Pi}_{\mathsf{A}}^\mathsf{T}\mathbf{X}\mathbf{\Pi}_{\mathsf{B}})$ if and only if $\mathbf{\Pi}_{\mathsf{A}} = \mathbf{\Pi}_{\mathsf{B}}$, indicating that the function $f_{\mathsf{Diag}}(\cdot)$ possesses the joint 2D-PE property.

$f_{\mathsf{Diag}}(\cdot)$ lacks trainable parameters and hence cannot be used alone. We can introduce trainable parameters by combining the function with other layers. For example, according to Proposition \ref{prop:compound_func}, if $g_{\Theta}(\cdot)$ satisfies the ind. 2D-PE property, then $f_{\mathsf{Diag}}(g_{\Theta}(\cdot))$ satisfies the joint 2D-PE property because $\mathcal{F}_{\text{ind. 2D-PE}} \subsetneq \mathcal{F}_{\text{joint 2D-PE}}$ (see Fig. \ref{fig:hypo}). Hence, we can use a DNN that satisfies the ind. 2D-PE property as $g_{\Theta}(\cdot)$ and use $f_{\mathsf{Diag}}(\cdot)$ as an output layer. Then, $f_{\mathsf{Diag}}(g_{\Theta}(\cdot))$ satisfies the property in \eqref{eq:pre_joint_2d_pe} and is trainable.
$f_{\mathsf{Diag}}(\cdot)$ is a module that can be used at the output layer\footnote{Using $f_{\mathsf{Diag}}(\cdot)$ as input layer also ensures the joint 2D-PE property, which however incurs the loss of information in the off-diagonal elements.} of a DNN to satisfy the PE properties with ``joint'' (say \eqref{eq:pre_joint_2d_pe},  \eqref{eq:pre_nested_partial_joint_2d_pe} and \eqref{eq:pre_nested_joint_2d_pe}).

\vspace{-2.9mm}\subsection{Design of Specific PE-Formers}\label{subsec:specific PE-Former}\vspace{-1mm}
From the analyses in section \ref{HYSPDNNL}, we can develop a specific PE-Former to match the PE property of a wireless policy by integrating previous modules and carefully designing tokens and their representations. To showcase the practical applicability, we first take the learning of several policies in section \ref{w-policies} as examples and then summarize the design rule.

\subsubsection{MU-MIMO Precoding (Partial-nested 2D-PE Property)}
 To allow the attention mechanism for modeling the interference between UEs and data streams, $\text{AN}^{\mathtt{UE}_k}_n$ is defined as a token, whose representation is its channel vector, i.e., $\mathbf{d}_{k_n}^{(0)} = [\mathsf{Re}(\mathbf{h}_{k_n}^{\mathsf{T}}),\mathsf{Im}(\mathbf{h}_{k_n}^{\mathsf{T}})]^{\mathsf{T}}$, similar to the 2D-Former.

From Fig. \ref{fig:hypo}, we know that NATT (P.S.) cascaded by FFN (P.S.) satisfies the partial-nested 2D-PE property. According to Proposition \ref{prop:compound_func}, we can compose the two modules in each layer to learn the MU-MIMO precoding policy.

In the $\ell$-th layer, the token representation is first updated by NATT (P.S.) with $\eqref{eq:2d-nested-att}$, where $N_{\mathsf{s}}=N_{\mathsf{r}}, N_{\mathsf{sub}} = K$, and $\mathbf{U}^{\mathsf{K}}_{\mathsf{S}},\mathbf{U}^{\mathsf{V}}_{\mathsf{S}},\mathbf{U}^{\mathsf{K}}_{\mathsf{D}},\mathbf{U}^{\mathsf{V}}_{\mathsf{D}}$ are with the structure in \eqref{eq:matrix_structure}, which is then input into FFN (P.S.). After updated by $L$ layers, the output $\mathbf{d}^{(L)}_{k_n}$
is normalized as $\mathbf{v}_{k_n} = \mathbf{d}_{k_n}^{(L)}  \sqrt{P_t / \sum_{i=1}^{K}\sum_{j=1}^{N_{\mathsf{r}}}  \mathbf{d}_{i_j}^{(L)\mathsf{T}}\mathbf{d}_{i_j}^{(L)}}$ to meet the power constraint. In NATT (P.S.), the local attention and global attention respectively reflect the inter-stream interference of a UE and the interference among UEs.

\subsubsection{CB (Nested Partial-joint 2D-PE Property)}  To satisfy the PE property of the CB policy, NATT (P.S.), FFN (P.S.), and the diagonal output layer are required. Both the $\text{AN}^{\tt BS}$ set and $\text{AN}^{\tt UE}$ set are nested sets in this setting, while using NATT (P.S.) alone can only satisfy the PE property for a single nested set. Yet interference only exists among users but not among antennas, thereby the attention mechanism only needs to be introduced into the $\text{AN}^{\tt UE}$ set  \cite{Gformer}.


By designing the AN at each UE (say UE$_{m_k}$) as a token, NATT (P.S.) can satisfy the nested PE property induced by $\text{AN}^{\tt UE}$ set (i.e., the UE set), meanwhile the attention mechanism can reflect MUI and inter-cell interference (ICI). To satisfy the nested PE property induced by the $\text{AN}^{\tt BS}$ set, we distinguish $\text{ANs}^{\tt BS}$ across different subsets via designing different token representations. For each token, we design the channel vector from each BS (say $\text{BS}_{m'}$) to UE$_{m_k}$ as a representation, i.e., $\mathbf{d}^{(0)}_{m',m_k} = [\mathsf{Re}(\mathbf{h}_{m',m_k}^{\mathsf{T}}), \mathsf{Im}(\mathbf{h}_{m',m_k}^{\mathsf{T}})]^\mathsf{T}$. Since there are $M$ BSs, each token is associated with $M$ representations.

 In the $\ell$-th layer, the $m'$-th ($m'=1,\cdots, M$) representation of the $m_k$-th token is first updated by NATT (P.S.) with \eqref{eq:2d-nested-att}, where $N_{\mathsf{s}}=K, N_{\mathsf{sub}} = M$,
and $\mathbf{U}^{\mathsf{K}}_{\mathsf{S}},\mathbf{U}^{\mathsf{V}}_{\mathsf{S}},\mathbf{U}^{\mathsf{K}}_{\mathsf{D}},\mathbf{U}^{\mathsf{V}}_{\mathsf{D}}$ are with the structure in \eqref{eq:matrix_structure}. The representations are then updated by FFN (P.S.).
After $L$ layers, the output $\mathbf{D}^{(L)}=[\mathbf{d}_{m',m_k}^{(L)}]$ is mapped to  $\mathbf{V} = f_{\mathsf{Diag}}(\mathbf{D}^{(L)})$ by the diagonal output layer. $\mathbf{V}$ is then normalized to satisfy the power constraint at each BS. The local and global attention reflect MUI and ICI, respectively.

\subsubsection{Power Allocation (Nested Joint 2D-PE Property)}
The only difference in the PE property of the power allocation policy in section \ref{CBandPC} from that of the CB policy is that the elements in each subset ought to be permuted jointly instead of independently. We can add additional diagonal output layers on the PE-former for CB to satisfy its property. Specifically, the updated representation after $L$ layers each consisting of NATT (P.S.) and FFN (P.S.), $\mathbf{D}^{(L)} \in \mathbb{R}^{MK\times MK}$, is first mapped to $\mathbf{V}' = f_{\mathsf{Diag}}(\mathbf{D}^{(L)}) \in \mathbb{R}^{MK\times MK}$, where $\mathbf{V}' \triangleq \mathrm{diag}(\mathbf{V}_1',\cdots,\mathbf{V}_M') $, $\mathbf{V}_m' \in \mathbb{R}^{K\times K}$. Then, for the representation of tokens within each subset (say the $m$-th subset), we apply an additional diagonal output layer as $\mathbf{V}_m = f_{\mathsf{Diag}}(\mathbf{V}_m')$ for satisfying the nested joint PE property.

\subsubsection{Wideband MISO Precoding (Ind. 3D-PE Property)}
For learning this policy, each UE is defined as a token whose representation is its channel vector. The ATT (P.S.) and FFN (P.S.) should be used to satisfy the PE property induced by $\text{AN}^{\tt BS}$ set and UE set. To further satisfy the PE property induced by the $\text{RB}$ set, both modules are shared among RBs.

\subsubsection{Rule of Module Composition and Token Design} Specific PE-formers can be designed with the following principle.
\begin{enumerate}
    \item  Select modules with appropriate hypothesis spaces according to Proposition \ref{prop:compound_func} and Fig. \ref{fig:hypo}.
    \item  Design tokens such that the attention can model the correlation of channels concerned in a task.
    \item  Design token representations to distinguish tokens across different subsets when more than one nested sets exist.
\end{enumerate}
We list ten PE-Formers with different PE properties by composing seven modules in different colors in Fig. \ref{fig:diff-mod-comb}, and summarize the rule for the design as follows.

\begin{enumerate}
    \item \emph{1D-PE or nested 1D-PE property}: Modules without parameter-sharing are composed. For 1D-PE or nested 1D-PE, ATT or NATT is selected, cascaded by an FFN.
    \item \emph{2D-PE properties without joint permutation}: Modules with parameter-sharing are composed (i.e., NATT (P.S.) or ATT (P.S.), and FFN (P.S.)). If there is no nested set, ATT (P.S.) is selected. Otherwise, NATT (P.S.) is selected. If both sets are nested sets, token representations should be designed as for CB.
    \item \emph{2D-PE properties with joint permutation:} For joint 2D-PE or nested partial-joint 2D-PE, a diagonal output layer is added. For nested joint 2D-PE, multiple diagonal output layers are required.
    \item \emph{Ind. and partial-nested 3D-PE properties:} The modules adopted for satisfying 2D-PE properties are shared on the third set (e.g., $\text{RB}$ set).
\end{enumerate}

\begin{figure}
    \centering

    \subfigure[Rule of module composition for specific PE-Formers]{
    \includegraphics[width=0.9\linewidth]{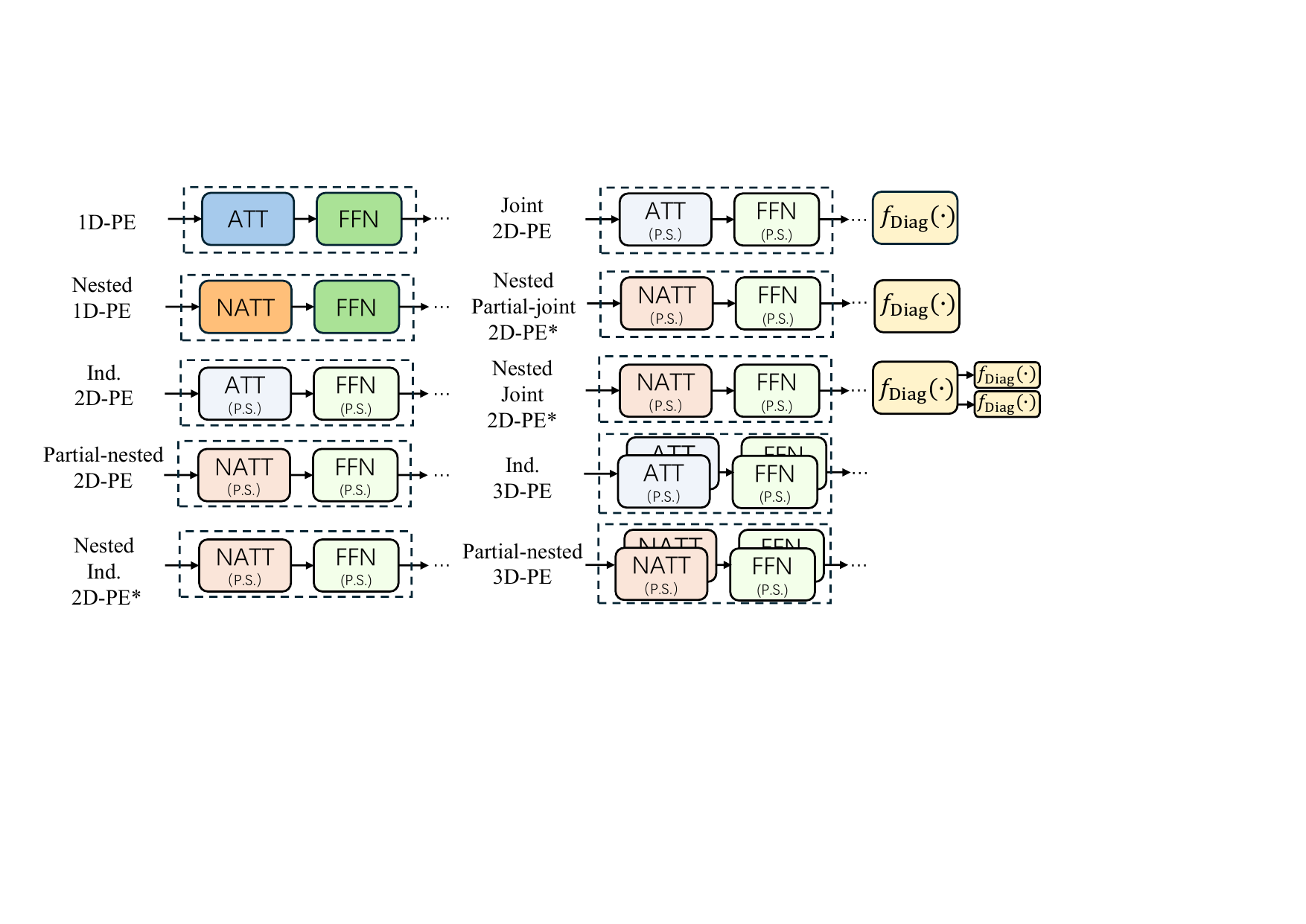}\label{fig:diff-mod-comb}}

    \subfigure[PE-MoFormer]{
        \includegraphics[width=0.8\linewidth]{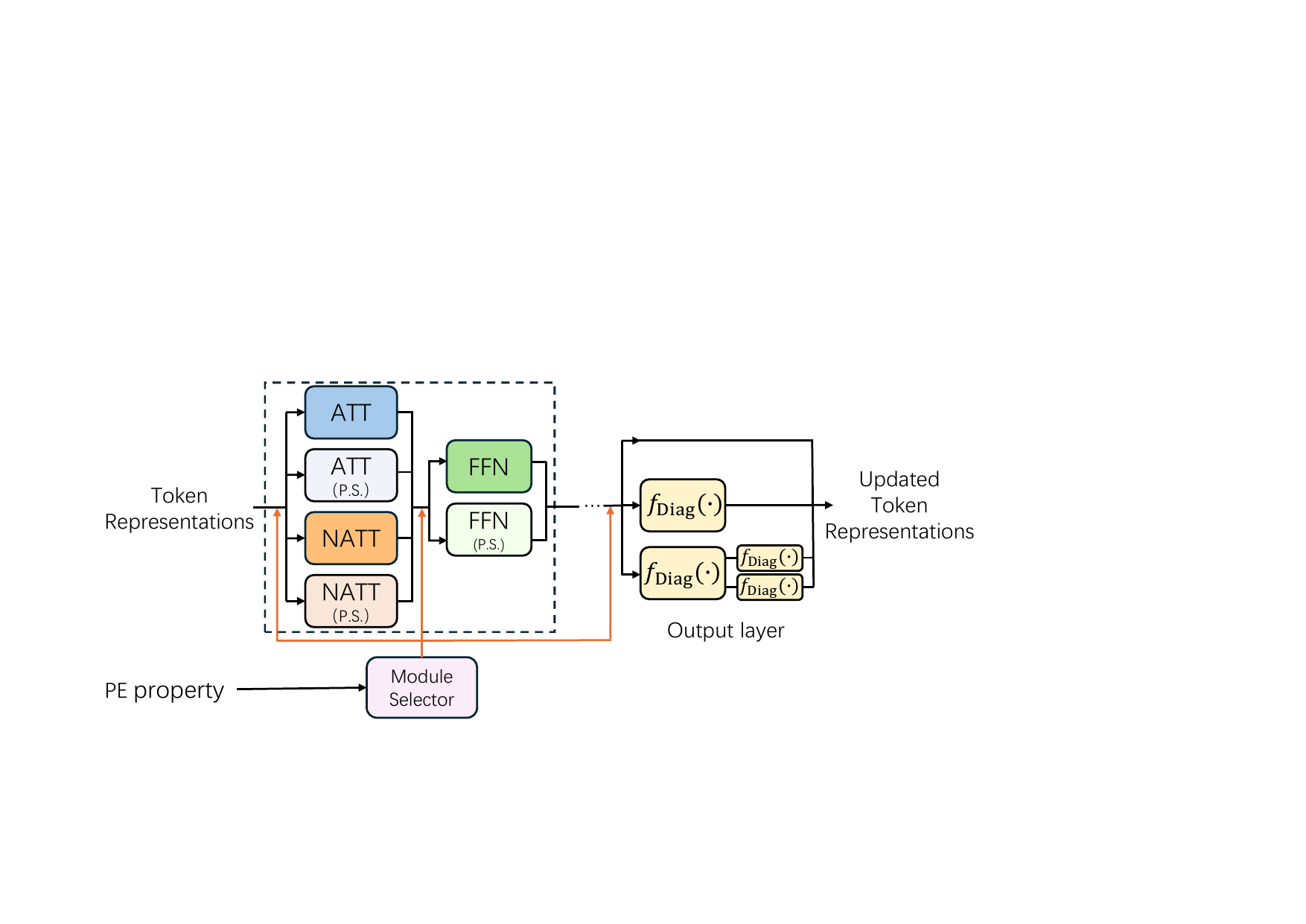}\label{fig:PE-Formerframework}
    }\vspace{-2mm}
    \caption{Architectures of specific PE-Formers and PE-MoFormer.
    Dashed boxes indicate a layer. For satisfying PE properties induced by two nested sets (marked with ``*"), each token has multiple representations.}
    \label{}\vspace{-6mm}
\end{figure}


\vspace{-3mm}
\subsection{Design of PE-MoFormer}\label{subsec:universal PE-Former}\vspace{-0.5mm}
As shown in Fig. \ref{fig:diff-mod-comb}, different PE properties can be satisfied by reusing several modules. Meanwhile, many wireless policies share common structural properties. Inspired by these observations, we devise a \textbf{\emph{PE-MoFormer}} as illustrated in Fig. \ref{fig:PE-Formerframework}, which assembles the modules to accommodate diverse PE properties.
During inference, only the selected modules are activated for a given wireless policy, hence the computational complexity is the same as a specific PE-Former.

\subsubsection{Architecture}
The PE-MoFormer consists of $L$ layers and a module selector, each layer is composed of several cascaded modules selected from three types of modules.

One type of modules includes the attention sub-layers, i.e., {\bf ATT}, {\bf ATT (P.S.)}, {\bf NATT}, and {\bf NATT (P.S.)}. The second type of modules includes {\bf FFN} and {\bf FFN (P.S.)}. The third type of module is the {\bf diagonal output layer} (i.e., $f_{\mathsf{Diag}}(\cdot)$).\footnote{For non-PE policies, the positional encoding can be added on the token representations, and then input into layers composed of {\bf ATT} and {\bf FFN}.}

The module selector stores two tables. One stores the PE properties of policies as exemplified in Table \ref{tab:PE property of wireless policies}, and the other
stores the module composition rule illustrated in Fig. \ref{fig:diff-mod-comb}.

\begin{table}[htbp]
  \centering
 \caption{Over 25 Wireless policies with 11 types of PE properties}
 \vspace{-2mm}
    \begin{tabular}{c|c}
    \hline\hline
    Wireless policies      & PE properties  \\ \hline
    \makecell{SU-MISO channel prediction, estimation \cite{CPdai}} & Non-PE
    \\ \hline

    \makecell{MU-MIMO / Wideband SU-MISO channel \\ prediction,  estimation \cite{LinFormer,AiBo_Tcom_PlugPlay,wirelessGPT} \\ Bandwidth allocation \cite{mehrabian2024joint,Satellite_1D-PEGNN2024}} & 1D-PE \\ \hline
    Multi-cast beamforming \cite{li2024hpe} & Nested 1D-PE  \\\hline
    \makecell{MU-MISO precoding \cite{zhao2022understanding,GAT,Kansformer} \\ User association \cite{userassoc} \\Signal detection \cite{DaiLL_LLMMultitask,Detection} } & Ind. 2D-PE \\ \hline
    \makecell{MU-MIMO precoding \cite{MIMOSVD}\\ Cell-free / Cooperative MU-MISO \\ beamforming \cite{ENGNNLiyang_2024, cell_freeMISO}} & Partial-nested 2D-PE \\ \hline
    \makecell{D2D power control \cite{shenyifei}, link scheduling \cite{Liye}} & Joint 2D-PE \\ \hline
    \makecell{Cell-free / Cooperative MU-MIMO \\ beamforming \cite{cell_freeMIMO}} & Nested ind. 2D-PE \\
    \hline
    \makecell{CB \cite{MISOCB}} & Nested partial-joint 2D-PE \\
    \hline
    \makecell{ Multi-cell power allocation \cite{guo2022het_pc}} & Nested joint 2D-PE
    \\ \hline
    \makecell{IRS-aided / Wideband MU-MISO \\ precoding \cite{ZBCRIS,MIMOSVD}\\ Hybrid MU-MISO precoding \cite{Gformer,AdaTTD_NearField, liu2023multidimensional}\\ Joint beamforming and sensing \cite{ISAC}} & Ind. 3D-PE \\ \hline

    \makecell{IRS-aided / Wideband
    MU-MIMO \\ precoding \cite{mehrabian2024joint,MIMOSVD} \\ Partially-connected hybrid precoding \cite{HybridBeamforming_1D-PEGNN2024}} & Parital-nested 3D-PE \\ \hline

    \hline\hline
    \end{tabular}%
  \label{tab:PE property of wireless policies} \vspace{-3mm}
\end{table}%

\subsubsection{Training Strategies}\label{sec:training strategies}
One training strategy is multi-task learning, where the learning of each wireless policy is treated as a task, and the training set consists of the samples from multiple tasks. The loss function is defined as $ \mathcal{L} = \sum_{i} \lambda_i\mathcal{L}_i$ for joint training,
where $\mathcal{L}_i$ and $\lambda_i$ are the loss and weight of the $i$-th task, respectively.

Multi-task learning may lead to conflicts among multiple tasks, degrading the performance of some tasks. An alternative strategy is cross-task learning \cite{Yuwei_Transferlearning_2024}, where the PE-MoFormer is first pre-trained on some tasks and then fine-tuned for a specific task unseen during pre-training.

\vspace{-1mm}\begin{remark}
The PE-MoFormer is adaptable to various objective functions. For instance, it can be pre-trained to maximize the sum rate and then fine-tuned to maximize the minimum rate (only with 100 samples according to our simulation).
\end{remark}

\vspace{-3mm}
\section{Simulations}\label{sec:simulations}\vspace{-1mm}
In this section, we first evaluate the learning performance and generalizability of specific PE-Formers for several wireless policies to show the impact of the proper composition of designed modules. Then, we evaluate the performance of the PE-MoFormer and compare it with an LLM-based DNN.

\vspace{-3.5mm}\subsection{Simulation Setup and Hyper-parameters}\vspace{-1mm}
The number of ANs at a BS is $N_{\mathsf{t}}=64$, the number of UEs in a cell is $K=8$, and SNR = $10$ dB.
All samples for precoding are generated from the Saleh-Valenzuela (SV) channel model with four clusters and five rays. For wideband MU-MISO precoding, the samples are generated from a tap delay-$d$ SV channel model, where the number of taps is set as two, and the number of RBs is set as eight. For MU-MIMO precoding, the number of ANs at each UE is $N_{\mathsf{r}}=4$. For CB, the number of cells is $M=3$, and the ratio of desired channel gains to interference channel gains is randomly selected from $(0.5,1)$ since the interference channels are generally weaker than the desired channels.
For multi-cell power allocation, the samples are generated from the equivalent channel after zero-forcing beamforming, where other setups are the same as those for CB. For each task, 1,000 samples are generated for testing.

The hyper-parameters of each DNN are listed in Table \ref{tab:hyper_params}, and the activation function in hidden layers is $\mathrm{tanh(x)} = \frac{e^x -e^{-x}}{e^x + e^{-x}}$.
These setups will be used unless otherwise specified.

  \vspace{-2mm}\begin{table}[!ht]
    \centering
    \caption{Fine-tuned Hyper-parameters}
    \vspace{-2mm}
    \begin{tabular}{c|c|c|c}
    \hline\hline
        DNNs & Layer & $J^{(\ell)}$ in hidden layers & Learning rate \\ \hline
        \makecell{Specific PE-Formers/\\ PE-MoFormer
        } & 3 & [32,32,32] & 0.002 \\ \hline
        2D-Former & 3 & [32,32,32] & 0.002 \\ \hline
        Edge-GCN & 4 & [64,64,64,64] & 0.002 \\ \hline
        Hier-Former&3 &[16,16,16] & 0.0005\\ \hline\hline
    \end{tabular}\label{tab:hyper_params}    \vspace{-4mm}
\end{table}

\vspace{-3mm}\subsection{Specific PE-Formers: Impact of Modules and Composition}\vspace{-1mm}
We select two tasks: MU-MIMO precoding and CB, to evaluate the specific PE-Formers (with legend ``PE-Former''). The SE maximization problems under power constraint are considered. The learning performance is measured by the SE ratio, i.e., the ratio of the SE achieved by a policy learned by a DNN to the SE achieved by the weighted minimal mean squared error (WMMSE) algorithm \cite{wmmse}. All the DNNs are trained in an unsupervised manner with an Adam optimizer. The loss function is the negative SE averaged over all training samples.
We compare with the following DNNs.
\begin{itemize}
    \item \textbf{Edge-GCN:} This is a GCN framework in \cite{liu2023multidimensional}, where the parameter-sharing of weight matrices is designed to satisfy the PE properties of the considered policies.
    \item \textbf{2D-Former:} This is a specific PE-Former in \cite{Gformer}, which only satisfies the  ind. 2D-PE property.
    \item \textbf{Hier-Former:} This is a revised version of the DNN proposed in \cite{li2024hpe} for learning MU-MIMO precoding or CB policies, where the token is designed as a $\text{AN}^{\tt UE}$, the decoder is removed, and the PE property induced by the $\text{AN}^{\tt BS}$ set is neglected.
\end{itemize}

\vspace{-0.1mm}\subsubsection{Learning Performance}
In Figs. \ref{fig:MU-MIMO performance} and \ref{fig:CB performance}, we show the SE ratio of each DNN for learning the two policies. We can see that specific PE-Formers can achieve satisfactory performance with a small number of training samples (achieving 95\% performance with fewer than 40 samples). By contrast, the 2D-Former cannot achieve the same performance even with 40,000 training samples. This is because the hypothesis space of the 2D-Former is smaller than $\mathcal{F}_{\text{partial-nested PE}}$, which leads to the performance loss incurred by approximation error. The Hier-Former can achieve 95\% performance with 40,000 training samples for learning MU-MIMO precoding and 10,000 samples for CB.
This is because the hypothesis space of the Hier-Former is larger than $\mathcal{F}_{\text{partial-nested PE}}$, which requires more training samples to reduce the estimation error.  Although with matched PE properties to the policies, the Edge-GCN cannot achieve 50\% performance even with 40,000 training samples because it does not use an attention mechanism, which is crucial for learning precoding \cite{guo2024recursive}.

\vspace{-2mm}\begin{figure}[ht]
    \centering

    \subfigure[MU-MIMO precoding]{
    \includegraphics[width=0.65\linewidth]{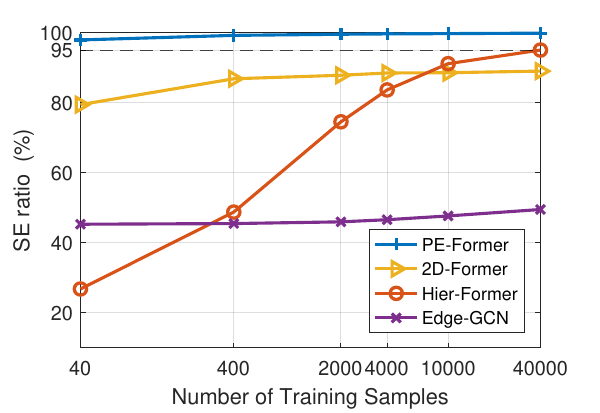}
    \label{fig:MU-MIMO performance} }\\[-1mm]
    \subfigure[CB]{
    \includegraphics[width=0.65\linewidth]{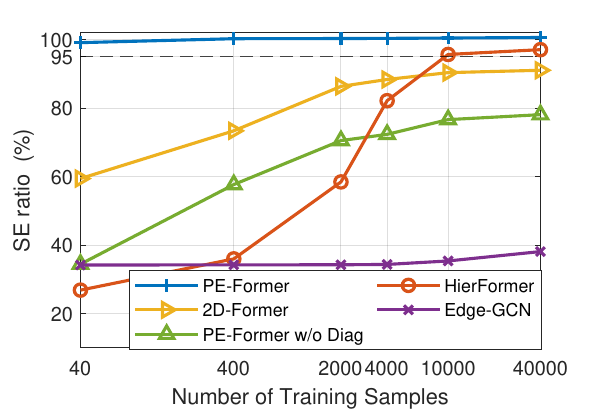}
    \label{fig:CB performance}}\\[-1mm]

    \vspace{-2mm}\caption{Learning performance versus the number of training samples.}
    \label{fig:learning performance}\vspace{-2mm}
\end{figure}

To show the impact of the diagonal output layer, we replace  $f_{\mathsf{Diag}}(\cdot)$ in the PE-Former by a sum function that does not satisfy the joint PE property, and use this DNN (denoted as ``PE-Former w/o Diag" in Fig. \ref{fig:CB performance}) to learn the CB policy. The result shows that its SE ratio is much lower than the specific PE-Former, due to the mismatch of the PE property.

\subsubsection{Size-generalizability}

This metric is evaluated on a test set containing samples with different problem sizes from the training samples. A DNN is regarded as size-generalizable if its performance tested on samples with ``unseen" sizes degrades by no more than 20\% from the performance on training samples, without re-training. The setups for generating training and test samples are provided in Table \ref{tab:generalize setup}  (see next page),
which ensures that the test set includes the samples with problem sizes not ``seen" during training. For each setup, 4,000 training samples and 1,000 test samples are generated.

\begin{table*}[htbp]
  \centering
  \caption{Setup for evaluating size-generalizability}
  \renewcommand\arraystretch{1.2}
  \vspace{-2mm}
    \begin{tabular}{c|c|c|c|c}
    \hline \hline
    Wireless policies & \multicolumn{2}{c|}{Setup} & Train set & Test set \\\hline

    \multirow{3}[0]{*}{MU-MIMO precoding}

    & \makecell{Generalizability to $N_{\mathsf{t}}$} & $K=4,N_{\mathsf{r}}=4$ & $N_{\mathsf{t}} \in \{64,128\}$ & $N_{\mathsf{t}}\in \{16,32,64,128,256,512\}$ \\\cline{2-5}
    & \makecell{Generalizability to $K$} & $N_{\mathsf{t}}=64,N_{\mathsf{r}}=4$ & $K \in \{6,7,8\}$ & $K\in \{4,\cdots,10\} $\\\cline{2-5}

    & \makecell{Generalizability to $N_{\mathsf{r}}$} & $N_{\mathsf{t}}=64,K=8$ & $N_{\mathsf{r}}\in \{3,4\}$ &  $N_{\mathsf{r}}\in \{1,\cdots,6\}$ \\

    \hline
    \multirow{3}[0]{*}{CB}

    &\makecell{Generalizability to $N_{\mathsf{t}}$} & $K=4,M=3$ & $N_{\mathsf{t}}\in \{64,128\}$ & $N_{\mathsf{t}}\in\{16,32,64,128,256,512\}$ \\\cline{2-5}

    &\makecell{Generalizability to $K$} & $N_{\mathsf{t}}=64,M=3$ & $K\in \{6,7,8\}$ & $K\in \{4,\cdots,10\}$  \\\cline{2-5}

    & \makecell{Generalizability to $M$} & $N_{\mathsf{t}}=64,K=8$ & $M\in \{3,4,5\} $& $M\in \{2,\cdots,6\}$ \\
        \hline \hline
    \end{tabular}%
  \label{tab:generalize setup}  \vspace{-3mm}
\end{table*}%

In Fig. \ref{fig: gen-mimo} (see next page), we show the size-generalizability for MU-MIMO precoding. We can see that the specific PE-Former performs the best and is well-generalizable to different values of $N_\mathsf{t}$, $K$, and $N_{\mathsf{r}}$. The 2D-Former is also size-generalizable but performs worse than the specific PE-Former, due to the approximation error.  The Hier-Former is generalizable to the values of $K$ and $N_{\mathsf{r}}$, but cannot be generalized to the values of $N_{\mathsf{t}}$.
This is because the PE property induced by the $\text{AN}^{\tt BS}$ set
is overlooked. The Edge-GCN can be generalized to different values of $N_\mathsf{t}$, $K$, and $N_{\mathsf{r}}$, but does not perform well, again due to not using an attention mechanism.

\begin{figure*}[!t]
    \centering

    \subfigure[Generalizability to $N_{\mathsf{t}}$, $K=4, N_{\mathsf{r},k}=4$]{ \includegraphics[width=0.31\linewidth]{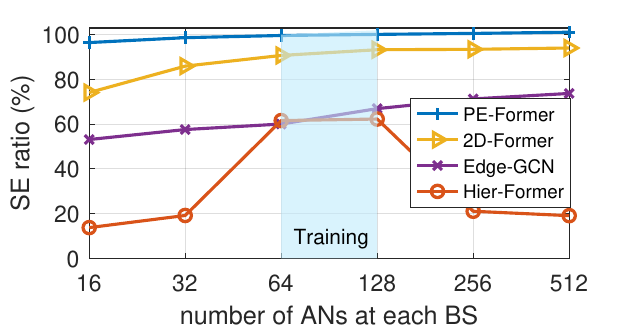}
    \label{fig:size-gen-N}}%
    \subfigure[Generalizability to $K$, $N_{\mathsf{t}}=64,N_{\mathsf{r},k}=4$]{
    \includegraphics[width=0.31\linewidth]{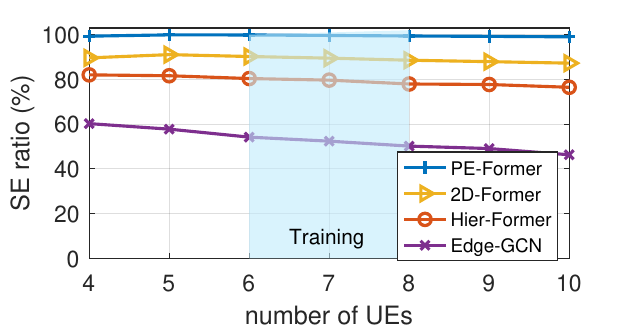}
    \label{fig:size-gen-K} }
    \subfigure[Generalizability to $N_{\mathsf{r}}$, $N_{\mathsf{t}} = 64, K=8$]{\includegraphics[width=0.31\linewidth]{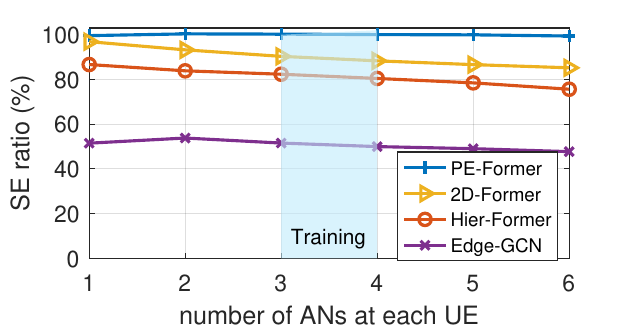} \label{fig:size-gen-S}
    }
    \vspace{-2mm}

    \caption{Size-generalizability for MU-MIMO precoding, SNR = 10 dB}
    \label{fig: gen-mimo}\vspace{-4mm}
\end{figure*}

In Fig. \ref{fig:size-gen-CB}  (see next page), we show the size-generalizability for CB. Similarly, the results show that the specific PE-Former performs the best and is well-generalizable to different values of $N_{\mathsf{t}}$, $K$, $M$. The 2D-Former is generalizable to different values of $N_{\mathsf{t}}$, $K$, $M$, but performs worse than the specific PE-Former. The Hier-Former cannot be generalized to the values of $N_{\mathsf{t}}$ and $M$ due to the overlooked PE property induced by the $\text{AN}^{\tt BS}$ set, whose size depends on the values of $N_{\mathsf{t}}$ and $M$. The Edge-GCN is size-generalizable but performs the worst due to the lack of attention mechanism.






\begin{figure*}[!t]
    \centering

    \subfigure[Generalizability to $N_{\mathsf{t}}$, $K=4, M=3$]{ \includegraphics[width=0.31\linewidth]{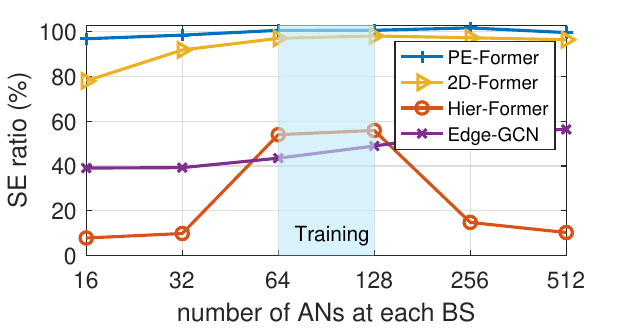}
    \label{fig:size-gen-N-CB}}%
    \subfigure[Generalizability to $K$, $N_{\mathsf{t}}=64,M=3$]{
    \includegraphics[width=0.31\linewidth]{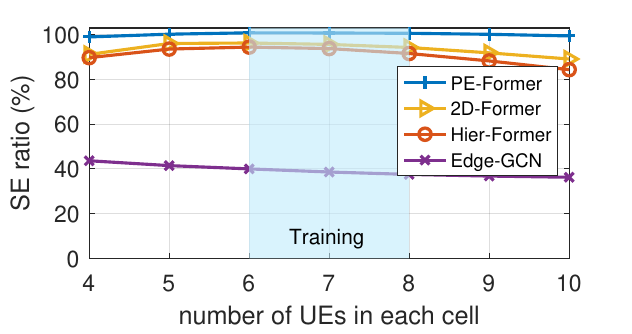}
    \label{fig:size-gen-K-CB} }
    \subfigure[Generalizability to $M$, $N_{\mathsf{t}} = 64, K=8$]{\includegraphics[width=0.31\linewidth]{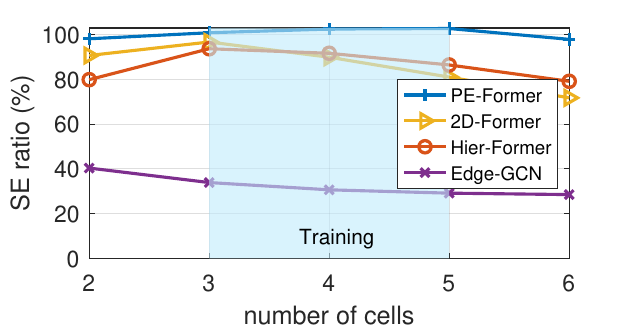} \label{fig:size-gen-M-CB}
    }
\vspace{-2mm}
    \caption{Size-generalizability for CB, SNR = 10 dB}
    \label{fig:size-gen-CB}\vspace{-2mm}
\end{figure*}

\vspace{-0.1mm}\subsubsection{Generalizability to SNRs and  Channels}
To evaluate the generalizability of the specific PE-Former to SNRs and channel statistics, we take learning MU-MIMO precoding as an example. A specific PE-Former is trained with 4,000 samples generated from the SV channel model at SNR = 10 dB, and then tested on 1,000 samples generated with different SNRs and Rayleigh channel without re-training. The SE ratios achieved in different scenarios are presented in Table \ref{tab:gen_to_channel}. The result shows that the specific PE-Former is well-generalizable to SNRs and channels.

\vspace{-3mm}
\subsection{PE-MoFormer}\vspace{-1mm}
We evaluate the performance of the PE-MoFormer and compare it with a LLM-based method. During training, all samples in the same batch are generated for the same task, and the batch size is set to 32. We no longer provide the generalization performance of the PE-MoFormer, which is almost the same as the specific PE-formers.

\subsubsection{Multi-task Learning} In Table \ref{tab:joint-training}, we provide the performance of the PE-MoFormer via joint training. In addition to five SE-maximal resource allocation tasks (i.e., narrow- and wide-band MU-MISO precoding, MU-MIMO precoding, CB, and power allocation), we also consider channel estimation in the MU-MISO
system, whose learning performance is measured by the mean square error (MSE), and the modules of ATT and FFN are trained with labels. We also provide the space complexity, i.e., the number of trainable parameters required for achieving 98\% SE ratio for each resource allocation task or the same MSE as the linear minimal mean square error (LMMSE) channel estimator. The training set consists of 100 samples for each task.
The loss function is set as the negative weighted sum of losses from all tasks. To avoid tasks with high losses dominating the training process, the weights are chosen such that the magnitudes of the weighted losses are comparable across tasks.
For comparison, we also train a specific PE-Former for each task with 100 samples and provide their performance.

\begin{table}[!t]
  \centering
  \caption{Generalizability to SNRs and channels}
  \vspace{-2mm}
    \begin{tabular}{c|c|c|c|c}
    \hline\hline
    \multirow{2}{*}{Channel model} & \multicolumn{4}{c}{SNRs} \\
\cline{2-5}          & 5 dB   & 10 dB  & 15 dB  & 20 dB \\
   \hline
    SV channel & 97.92\% & 99.61\% & 98.76\% & 96.17\% \\
    \hline
    Rayleigh channel & 99.49\% & 99.85\% & 99.42\% & 98.30\% \\
   \hline\hline
    \end{tabular}%
  \label{tab:gen_to_channel} \vspace{-5mm}
\end{table}%

\begin{table}[htbp]
  \centering
 \caption{Performance and space complexity on multiple tasks}
 \vspace{-2mm}
    \begin{tabular}{c|c|c|c|c}
    \hline\hline
    \multirow{3}{*}{Wireless tasks}      & \multicolumn{2}{c|}{PE-MoFormer} & \multicolumn{2}{c}{Specific PE-Formers} \\
    \cline{2-5}
     & \makecell{MSE/ \\ SE ratio}   & Space & \makecell{MSE/\\SE ratio}  & Space \\
    \hline
    Channel estimation*   & 0.026 &\multirow{7}{*}{9.71k} & 0.026 & 1.23k \\
    \cline{1-2}\cline{4-5}
MU-MISO precoding & 99.42\% &  & 99.51\% & 3.02k \\
\cline{1-2}\cline{4-5}
MU-MIMO precoding & 99.07\% &       & 99.31\% & 8.48k \\
\cline{1-2}\cline{4-5}
CB & 100.21\% &       & 100.48\% & 8.48k \\
\cline{1-2}\cline{4-5}
Power allocation & 99.13\% &       & 99.22\% & 1.23k \\
\cline{1-2}\cline{4-5}
\makecell{Wideband MU-MISO\\precoding} & 99.50\% &       & 99.62\% & 3.02k \\

    \hline\hline
    \multicolumn{5}{l}{\makecell[l]{ * MSE of the LMMSE  algorithm is 0.089.}}\\
    \end{tabular}%
  \label{tab:joint-training}  \vspace{-3mm}
\end{table}%


The results show that the PE-MoFormer can achieve 99\% SE ratios on all resource allocation tasks and an MSE of 0.026 on channel estimation that is lower than the LMMSE estimator. Compared to the specific PE-Formers, the PE-MoFormer exhibits a slight performance drop (within 0.3\%) due to the conflict of tasks. However, the overall space complexity of specific PE-Formers grows with the number of tasks (i.e., 1.23k + ... +3.02k = 25.48k), which is 25.48 / 9.71  = 2.62 times over the PE-MoFormer.

\subsubsection{Cross-task Learning} The PE-MoFormer is first pre-trained using 100 samples from MU-MISO precoding and CB tasks. Then, it is fine-tuned using 100 samples and tested with the samples generated for MU-MIMO and wideband MU-MISO precoding tasks. This ensures that all modules required for the two tested tasks have been pre-trained. The SE ratios achieved on the two tasks versus the number of epochs for fine-tuning are shown in Fig. \ref{fig:fine-tune}. For comparison, we also present the results of the PE-MoFormer trained for the two tested tasks from scratch with 100 samples.

The result shows that the pre-trained PE-MoFormer performs well with much fewer epochs for fine-tuning than the PE-MoFormer trained from scratch. This indicates that the pre-trained PE-MoFormer can quickly adapt to new wireless tasks ``unseen'' during pre-training.

\vspace{-0.2mm}\begin{figure}[ht]
    \centering
    \includegraphics[width=0.75\linewidth]{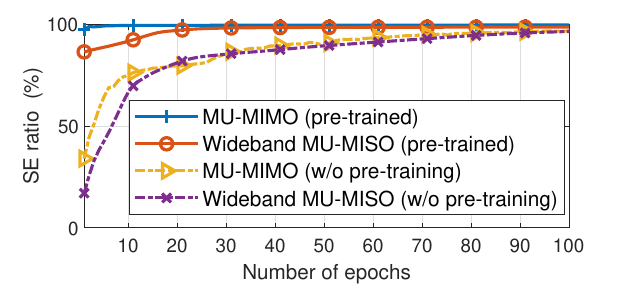}
    \vspace{-3mm}
    \caption{Learning performance versus the number of epochs}
    \label{fig:fine-tune}\vspace{-5mm}
\end{figure}

\subsubsection{Comparison with a LLM-based DNN} Finally, we compare with the DNN in \cite{DaiLL_LLMMultitask}, which adopts an LLM (e.g., GPT-2) as the backbone.
We only provide results on the model-driven MU-MISO precoding task considered therein, where the LLM actually learns the mapping from $\mathbf{H}$ to $\mathbf{P}$ and $\mathbf{\Lambda}$.
Specifically, a Transformer is used as the task-specific input encoder while an FNN and the formula in \eqref{eq: duality} are used as the task-specific output decoder in \cite{DaiLL_LLMMultitask}.
For a fair comparison, we also learn the same mapping by adding a sum-pooling after going through the selected modules for MU-MISO precoding in the PE-MoFormer in order to satisfy the PI property to ANs, and then recover $\mathbf{V}$ with \eqref{eq: duality}.

In Table \ref{tab:LLM}, we provide the SE ratios, space complexity, and inference time averaged on all test samples of the two DNNs. We can see that the space and inference complexities of the PE-MoFormer are significantly lower than the LLM-based DNN to achieve nearly the same performance.

 \vspace{-3mm}\begin{table}[htbp]
  \centering
  \caption{Comparison with a LLM-based method}
  \vspace{-2mm}
    \begin{tabular}{c|c|c|c}
    \hline\hline
    DNNs & SE ratio & Space & Time \\
    \hline
    \makecell{LLM-based \\ method }& 99.95\% & \makecell{ 6M + 118M +0.1M = 124.1M \\ (Encoder + GPT-2 + Decoder)} & 12.62 ms\\
    \hline
    \makecell{PE-MoFormer }& 99.93\% & 9.71k& 4.05 ms\\
    \hline\hline

    \end{tabular}%
  \label{tab:LLM} \vspace{-3mm}
\end{table}%

\vspace{-2mm}
\section{Conclusion}\label{sec:conclusion}
In this paper, we proposed a PE-MoFormer, a compositional DNN for learning a wide range of wireless problems with diverse PE properties. Specifically, we designed three modules that satisfy the nested 1D-PE, partial-nested 2D-PE, and joint 2D-PE properties, respectively. These modules, together with previously-designed modules, can be flexibly composed into 10 specific PE-Formers, each of which can be applied to the policies with one type of PE property. Building on the plug-and-play modules, we devised the PE-MoFormer that assembles two or three modules chosen from the seven modules by a selector, enabling a single model to learn over 25 wireless policies with 10 types of PE properties and without PE property. Simulation results showed that the PE-MoFormer performs well on multiple wireless tasks with low complexity, and exhibits strong generalizability across different problem scales, SNRs, and channel statistics.

This compositional design framework provides a foundation for extensions: new modules can be seamlessly inserted to accommodate more wireless problems with other PE properties when needed. Future work includes further reducing inference complexity and exploring the applicability of this framework to broader classes of wireless problems.

\begin{appendices} \numberwithin{equation}{section}
		\renewcommand{\thesectiondis}[2]{\Alph{section}}
\section{Proof of Proposition \ref{prop:compound_func}}\label{proof:compound_func}
\vspace{-1mm}
Without loss of generality, consider $\mathcal{F}_1$ is $\mathcal{F}_{\text{ind. 2D-PE}}$ and $\mathcal{F}_1 \subsetneq \mathcal{F}_2$. In other words, $g_{\Theta_1}(\cdot)$ satisfies the ind. 2D-PE property, and $f_{\Theta_2}(\cdot)$ satisfies one type of the PE properties in  \eqref{eq:pre_1d_pe}, \eqref{eq:pre_nested_1d_pe} and \eqref{eq:pre_joint_2d_pe} - \eqref{eq:pre_nested_joint_2d_pe}. For notational simplicity, we omit the subscripts of $f_{\Theta_2}(\cdot)$ and $g_{\Theta_1}(\cdot)$.



If $f(\cdot)$ satisfies the 1D-PE property in \eqref{eq:pre_1d_pe}, then $f(g(\mathbf{X}\mathbf{\Pi})) \overset{(a)}= f(g(\mathbf{X})\mathbf{\Pi}) \overset{(b)}{=} f(g(\mathbf{X}))\mathbf{\Pi}$, where $(a)$ holds because $g(\cdot)$ satisfies the ind. 2D-PE property in \eqref{eq:pre_2d_pe} (thus also satisfies the 1D-PE property, see Fig. \ref{fig:hypo}), and $(b)$ holds because $f(\cdot)$ satisfies the 1D-PE property. Hence, $f(g(\cdot))$ has the 1D-PE property.  Since $f(\mathbf{\Pi}_\mathsf{A}^{\mathsf{T}}\mathbf{X}\mathbf{\Pi}_{\mathsf{B}}) \neq \mathbf{\Pi}_\mathsf{A}^{\mathsf{T}}f(\mathbf{X})\mathbf{\Pi}_{\mathsf{B}}$,  $f(g(\mathbf{\Pi}_\mathsf{A}^{\mathsf{T}}\mathbf{X}\mathbf{\Pi}_\mathsf{B})) = f(\mathbf{\Pi}_\mathsf{A}^{\mathsf{T}}g(\mathbf{X})\mathbf{\Pi}_\mathsf{B}) \neq \mathbf{\Pi}_\mathsf{A}^{\mathsf{T}}f(g(\mathbf{X}))\mathbf{\Pi}_\mathsf{B}$, indicating that $f(g(\cdot))$ does not satisfy the 2D-PE property.

If $f(\cdot)$ satisfies the joint 2D-PE property in \eqref{eq:pre_joint_2d_pe}, then
$f(\mathbf{\Pi}_\mathsf{A}^{\mathsf{T}}g(\mathbf{X})\mathbf{\Pi}_\mathsf{B}) = \mathbf{\Pi}_\mathsf{A}^{\mathsf{T}}f(g(\mathbf{X}))\mathbf{\Pi}_\mathsf{B}$ holds if and only if $\mathbf{\Pi}_\mathsf{A} =\mathbf{\Pi}_\mathsf{B}$. Specifically, when  $\mathbf{\Pi}_\mathsf{A} =\mathbf{\Pi}_\mathsf{B} = \mathbf{\Pi}$, $f(g(\mathbf{\Pi}^{\mathsf{T}}\mathbf{X}\mathbf{\Pi}))\overset{(a)}{=}f(\mathbf{\Pi}^{\mathsf{T}}g(\mathbf{X})\mathbf{\Pi}) =\mathbf{\Pi}^{\mathsf{T}}f(g(\mathbf{X}))\mathbf{\Pi}$, where $(a)$ holds because $g(\cdot)$ satisfies the ind. 2D-PE property. When $\mathbf{\Pi}_\mathsf{A} \neq \mathbf{\Pi}_\mathsf{B}$, $f(g(\mathbf{\Pi}_\mathsf{A}^{\mathsf{T}}\mathbf{X}\mathbf{\Pi}_\mathsf{B})) = f(\mathbf{\Pi}_\mathsf{A}^{\mathsf{T}}g(\mathbf{X})\mathbf{\Pi}_\mathsf{B}) \neq \mathbf{\Pi}_\mathsf{A}^{\mathsf{T}}f(g(\mathbf{X}))\mathbf{\Pi}_\mathsf{B}$. Hence, $f(g(\cdot))$ satisfies the joint 2D-PE property but not the ind. 2D-PE property.

If $f(\cdot)$ satisfies the partial-nested 2D-PE property in \eqref{eq:pre_partial_nested_2d_pe}, then
$f(\mathbf{\Pi}_\mathsf{A}^{\mathsf{T}}g(\mathbf{X})\mathbf{\Pi}_\mathsf{B}) = \mathbf{\Pi}_\mathsf{A}^{\mathsf{T}}f(g(\mathbf{X}))\mathbf{\Pi}_\mathsf{B}$ holds if and only if $\mathbf{\Pi}_\mathsf{B} =\mathbf{\Omega}$. Specifically, when  $\mathbf{\Pi}_\mathsf{B} = \mathbf{\Omega}$, $f(g(\mathbf{\Pi}^{\mathsf{T}}_{\mathsf{A}}\mathbf{X}\mathbf{\Omega})) = f(\mathbf{\Pi}^{\mathsf{T}}_{\mathsf{A}}g(\mathbf{X})\mathbf{\Omega})=\mathbf{\Pi}^{\mathsf{T}}_{\mathsf{A}}f(g(\mathbf{X}))\mathbf{\Omega}$. When $\mathbf{\Pi}_\mathsf{B} \neq \mathbf{\Omega}$ , $f(g(\mathbf{\Pi}_\mathsf{A}^{\mathsf{T}}\mathbf{X}\mathbf{\Pi}_\mathsf{B})) \neq \mathbf{\Pi}_\mathsf{A}^{\mathsf{T}}f(g(\mathbf{X}))\mathbf{\Pi}_\mathsf{B}$. Hence, $f(g(\cdot))$ satisfies the partial-nested 2D-PE property but not ind. 2D-PE property.

When $f(\cdot)$ satisfies the PE properties in \eqref{eq:pre_nested_1d_pe}, \eqref{eq:pre_nested_indepedent_2D_pe}-\eqref{eq:pre_nested_joint_2d_pe},  we can also prove that $f(g(\cdot))$ has the same PE property as $f(\cdot)$ but not the ind. 2D-PE property, using similar derivations.




\section{Proof of Proposition \ref{prop:nested-pe}}\label{proof:nested-pe}
\vspace{-2mm}
We first define a mask matrix $\mathbf{M}$ as follows, \vspace{-1mm}
\begin{equation}
    \mathbf{M} \triangleq \mathbf{I}_{\mathsf{sub}} \otimes \mathbf{1}_{N_{\mathsf{s}}} =  \mathrm{diag}(\mathbf{1}_{N_{\mathsf{s}}},\cdots,\mathbf{1}_{N_{\mathsf{s}}})
\end{equation}
where $\mathbf{1}_{N_{\mathsf{s}}}$ represents a $N_{\mathsf{s}} \times N_{\mathsf{s}}$ all-one matrix.

It is not hard to prove that \vspace{-1mm}
\begin{equation}\label{PiM}
\mathbf{\Pi}\mathbf{M} \neq \mathbf{M}\mathbf{\Pi}
\end{equation}

\begin{lemma}\label{lemma:1}
For any nested permutation matrix $\mathbf{\Omega} = (\mathbf{\Pi}_{\mathsf{sub}}\otimes\mathbf{I}_{N_\mathsf{sub}})\mathrm{diag}(\mathbf{\Pi}_{1},\cdots,\mathbf{\Pi}_{N_\mathsf{sub}})$, $ \mathbf{M}\mathbf{\Omega} = \mathbf{\Omega}\mathbf{M}$ holds.
\end{lemma}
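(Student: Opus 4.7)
The plan is to prove the commutation $\mathbf{M}\mathbf{\Omega} = \mathbf{\Omega}\mathbf{M}$ by splitting $\mathbf{\Omega}$ into its two factors $(\mathbf{\Pi}_{\mathsf{sub}} \otimes \mathbf{I}_{N_{\mathsf{s}}})$ and $\mathrm{diag}(\mathbf{\Pi}_{1},\ldots,\mathbf{\Pi}_{N_{\mathsf{sub}}})$, and showing that $\mathbf{M}$ commutes with each factor separately. (Note the exponent on the $\mathbf{I}$ inside $\mathbf{\Omega}$ should be $N_{\mathsf{s}}$ for the block sizes to be consistent with $\mathrm{diag}(\mathbf{\Pi}_1,\ldots,\mathbf{\Pi}_{N_{\mathsf{sub}}})$, which is what we use below.) The two pivotal facts are: (i) any permutation matrix $\mathbf{\Pi}$ of size $N_{\mathsf{s}}$ satisfies $\mathbf{\Pi}\mathbf{1}_{N_{\mathsf{s}}} = \mathbf{1}_{N_{\mathsf{s}}}\mathbf{\Pi} = \mathbf{1}_{N_{\mathsf{s}}}$, because permuting the rows or columns of an all-ones block leaves it unchanged; and (ii) the mixed-product identity $(\mathbf{A}\otimes\mathbf{B})(\mathbf{C}\otimes\mathbf{D}) = (\mathbf{A}\mathbf{C})\otimes(\mathbf{B}\mathbf{D})$ for Kronecker products.

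First I would handle the inter-subset factor. Writing $\mathbf{M} = \mathbf{I}_{N_{\mathsf{sub}}}\otimes\mathbf{1}_{N_{\mathsf{s}}}$, the Kronecker identity yields
\begin{equation}
\mathbf{M}(\mathbf{\Pi}_{\mathsf{sub}}\otimes\mathbf{I}_{N_{\mathsf{s}}}) = \mathbf{\Pi}_{\mathsf{sub}}\otimes\mathbf{1}_{N_{\mathsf{s}}} = (\mathbf{\Pi}_{\mathsf{sub}}\otimes\mathbf{I}_{N_{\mathsf{s}}})\mathbf{M}, \notag
\end{equation}
so $\mathbf{M}$ commutes with $\mathbf{\Pi}_{\mathsf{sub}}\otimes\mathbf{I}_{N_{\mathsf{s}}}$.

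Next I would handle the intra-subset factor. Both $\mathbf{M}$ and $\mathrm{diag}(\mathbf{\Pi}_{1},\ldots,\mathbf{\Pi}_{N_{\mathsf{sub}}})$ are block-diagonal with the same block partition, so their product is block-diagonal with the $i$-th block being $\mathbf{1}_{N_{\mathsf{s}}}\mathbf{\Pi}_{i}$ on one side and $\mathbf{\Pi}_{i}\mathbf{1}_{N_{\mathsf{s}}}$ on the other. By fact (i), both blocks equal $\mathbf{1}_{N_{\mathsf{s}}}$, so $\mathbf{M}\cdot\mathrm{diag}(\mathbf{\Pi}_{1},\ldots,\mathbf{\Pi}_{N_{\mathsf{sub}}}) = \mathrm{diag}(\mathbf{\Pi}_{1},\ldots,\mathbf{\Pi}_{N_{\mathsf{sub}}})\cdot\mathbf{M} = \mathbf{M}$.

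Finally, I would combine the two commutations: $\mathbf{M}\mathbf{\Omega} = \mathbf{M}(\mathbf{\Pi}_{\mathsf{sub}}\otimes\mathbf{I}_{N_{\mathsf{s}}})\mathrm{diag}(\mathbf{\Pi}_{1},\ldots,\mathbf{\Pi}_{N_{\mathsf{sub}}}) = (\mathbf{\Pi}_{\mathsf{sub}}\otimes\mathbf{I}_{N_{\mathsf{s}}})\mathbf{M}\,\mathrm{diag}(\mathbf{\Pi}_{1},\ldots,\mathbf{\Pi}_{N_{\mathsf{sub}}}) = (\mathbf{\Pi}_{\mathsf{sub}}\otimes\mathbf{I}_{N_{\mathsf{s}}})\mathrm{diag}(\mathbf{\Pi}_{1},\ldots,\mathbf{\Pi}_{N_{\mathsf{sub}}})\mathbf{M} = \mathbf{\Omega}\mathbf{M}$. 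There is no real obstacle here; the main care required is keeping the block sizes and the Kronecker ordering straight, and in particular contrasting this with \eqref{PiM}, which fails precisely because a generic $\mathbf{\Pi}$ mixes elements across subsets and therefore cannot be absorbed into the block-diagonal structure of $\mathbf{M}$, whereas any nested $\mathbf{\Omega}$ respects that block structure by construction.
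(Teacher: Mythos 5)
Your proof is correct and follows essentially the same route as the paper's: commute $\mathbf{M}$ with the Kronecker factor via the mixed-product identity and with the block-diagonal factor via $\mathbf{1}_{N_{\mathsf{s}}}\mathbf{\Pi}_i = \mathbf{\Pi}_i\mathbf{1}_{N_{\mathsf{s}}}$, then chain the two commutations. Your remark that the identity block inside $\mathbf{\Omega}$ should be $\mathbf{I}_{N_{\mathsf{s}}}$ rather than $\mathbf{I}_{N_{\mathsf{sub}}}$ for the dimensions to be consistent is a valid catch of a notational slip in the paper.
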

\begin{proof}
 We first prove the following equality, \vspace{-1mm}
\begin{equation} \label{proofeq:1}  \mathbf{M}(\mathbf{\Pi}_{\mathsf{sub}}\otimes\mathbf{I}_{N_\mathsf{sub}}) = (\mathbf{\Pi}_{\mathsf{sub}}\otimes\mathbf{I}_{N_\mathsf{sub}})\mathbf{M}
\end{equation}

 The left-hand side of \eqref{proofeq:1} can be re-written as $\mathbf{M}(\mathbf{\Pi}_{\mathsf{sub}}\otimes\mathbf{I}_{N_\mathsf{sub}}) = (\mathbf{I}_{N_\mathsf{sub}} \otimes \mathbf{1}_{N_{\mathsf{s}}})\cdot(\mathbf{\Pi}_{\mathsf{sub}}\otimes\mathbf{I}_{N_\mathsf{sub}}) \overset{(a)}{=} ( \mathbf{I}_{\mathsf{sub}} \cdot \mathbf{\Pi}_{\mathsf{sub}}) \otimes(\mathbf{1}_{N_{\mathsf{s}}} \cdot \mathbf{I}_{\mathsf{sub}}) = \mathbf{\Pi}_{\mathsf{sub}} \otimes \mathbf{1}_{N_{\mathsf{s}}}$, where $(a)$ is due to $(\mathbf{A}\otimes\mathbf{B})\cdot(\mathbf{C}\otimes\mathbf{D}) = (\mathbf{A}\cdot\mathbf{C})\otimes(\mathbf{B}\cdot \mathbf{D})$. Similarly, the right-hand side  can be re-written as $(\mathbf{\Pi}_{\mathsf{sub}}\otimes\mathbf{I}_{N_\mathsf{sub}})\mathbf{M} = (\mathbf{\Pi}_{\mathsf{sub}}\otimes\mathbf{I}_{N_\mathsf{sub}})\cdot(\mathbf{I}_{N_\mathsf{sub}} \otimes \mathbf{1}_{N_{\mathsf{s}}}) =(\mathbf{\Pi}_{\mathsf{sub}} \cdot \mathbf{I}_{\mathsf{sub}}) \otimes(\mathbf{I}_{\mathsf{sub}}\cdot\mathbf{1}_{N_{\mathsf{s}}}) = \mathbf{\Pi}_{\mathsf{sub}} \otimes \mathbf{1}_{N_{\mathsf{s}}}$, which equals the left-hand side.

 Since $\mathbf{1}_{N_\mathsf{s}}$ is an all-one matrix,  $\mathbf{1}_{N_\mathsf{s}} \mathbf{\Pi} = \mathbf{\Pi}\,\mathbf{1}_{N_\mathsf{s}}$ for any permutation matrix $\mathbf{\Pi}$. Hence, \vspace{-1mm}
 \begin{equation}\label{proofeq:2}
 \begin{aligned}
          &\mathbf{M} \cdot \mathrm{diag}(\mathbf{\Pi}_{1},\cdots,\mathbf{\Pi}_{N_\mathsf{sub}}) = \mathrm{diag}(\mathbf{1}_{N_\mathsf{s}}\mathbf{\Pi}_{1},\cdots,\mathbf{1}_{N_\mathsf{s}}\mathbf{\Pi}_{N_\mathsf{sub}}) \\
          & = \mathrm{diag}(\mathbf{\Pi}_{1}\mathbf{1}_{N_\mathsf{s}},\cdots,\mathbf{\Pi}_{N_\mathsf{sub}}\mathbf{1}_{N_\mathsf{s}}) = \mathrm{diag}(\mathbf{\Pi}_{1},\cdots,\mathbf{\Pi}_{N_\mathsf{sub}})\mathbf{M}
 \end{aligned}
 \end{equation}

Then, we can derive that
\begin{equation}
\begin{aligned}
    \mathbf{M\Omega} &= \mathbf{M}(\mathbf{\Pi}_{\mathsf{sub}}\otimes\mathbf{I}_{N_\mathsf{sub}}) \mathrm{diag}(\mathbf{\Pi}_{1},\cdots,\mathbf{\Pi}_{N_\mathsf{sub}}) \\
    &\overset{(a)}{=}(\mathbf{\Pi}_{\mathsf{sub}}\otimes\mathbf{I}_{N_\mathsf{sub}})\mathbf{M}\mathrm{diag}(\mathbf{\Pi}_{1},\cdots,\mathbf{\Pi}_{N_\mathsf{sub}})\\
    &\overset{(b)}{=} (\mathbf{\Pi}_{\mathsf{sub}}\otimes\mathbf{I}_{N_\mathsf{sub}})\mathrm{diag}(\mathbf{\Pi}_{1},\cdots,\mathbf{\Pi}_{N_\mathsf{sub}})\mathbf{M}=\mathbf{\Omega M} \notag
\end{aligned}
\end{equation}
where $(a)$ and $(b)$ come from \eqref{proofeq:1} and \eqref{proofeq:2}, respectively.
\end{proof}

Denote $\mathbf{C}^{(\ell)} = [\mathbf{c}_{1_1}^{(\ell)},\cdots,\mathbf{c}_{1_{N_{\mathsf{s}}}}^{(\ell)},\cdots,\mathbf{c}_{N_{\mathsf{sub}{N_{\mathsf{s}}}}}^{(\ell)}]$ and $\mathbf{D}^{(\ell)} = [\mathbf{d}_{1_1}^{(\ell)},\cdots,\mathbf{d}_{1_{N_{\mathsf{s}}}}^{(\ell)},\cdots,\mathbf{d}_{N_{\mathsf{sub}{N_{\mathsf{s}}}}}^{(\ell)}]$ and $\mathbf{D}^{(\ell)} $ as the input and output representations of all tokens in NATT. The input-output relation of NATT in \eqref{eq:2d-nested-att} can be expressed in matrix form as,
\begin{equation}\label{proofeq:def}
    \begin{aligned}
        \mathbf{C}^{(\ell)} &= \mathbf{U}_\mathsf{S}^{\mathsf{V}}\mathbf{D}^{(\ell-1)}\Big(\big(\mathbf{U}_\mathsf{S}^{\mathsf{K}}\mathbf{D}^{(\ell-1)}\big)^\mathsf{T}\mathbf{D^{(\ell-1)}} \odot \mathbf{M}\Big)\\
        & \quad+ \mathbf{U}_\mathsf{D}^{\mathsf{V}}\mathbf{D}^{(\ell-1)}\big(\mathbf{U}_\mathsf{D}^{\mathsf{K}}\mathbf{D}^{(\ell-1)}\big)^\mathsf{T}\mathbf{D^{(\ell-1)}}\\
        &\triangleq F(\mathbf{D}^{(\ell-1)}) + G(\mathbf{D}^{(\ell-1)})
    \end{aligned}
\end{equation}
where $F(\cdot)$ and $G(\cdot)$ respectively represent the local attention and global attention mechanism, where $G(\cdot)$ satisfies the 1D-PE property as proved in \cite{mehrabian2024joint}.

Next, we prove that $F(\cdot)$ is with the nested 1D-PE property, i.e., $F(\mathbf{D}^{(\ell-1)}\mathbf{\Omega}) = F(\mathbf{D}^{(\ell-1)})\mathbf{\Omega}$.

According to the definition of $F(\cdot)$ in \eqref{proofeq:def}, we have, \vspace{-1mm}
\begin{equation}\label{eq:subtodef}
\begin{aligned}
&F(\mathbf{D}^{(\ell-1)}\mathbf{\Omega})
        \!=\!\mathbf{U}_\mathsf{S}^{\mathsf{V}}\mathbf{D}^{(\ell-1)}\mathbf{\Omega}\Big(\big(\mathbf{U}_\mathsf{S}^{\mathsf{K}}\mathbf{D}^{(\ell-1)}\mathbf{\Omega}\big)^\mathsf{T}\mathbf{D^{(\ell-1)}}\mathbf{\Omega} \!\odot\!\mathbf{M}\Big)\\
        &=\mathbf{U}_\mathsf{S}^{\mathsf{V}}\mathbf{D}^{(\ell-1)}\mathbf{\Omega}\Big(\mathbf{\Omega}^{\mathsf{T}}\big(\mathbf{U}_\mathsf{S}^{\mathsf{K}}\mathbf{D}^{(\ell-1)}\big)^\mathsf{T}\mathbf{D}^{(\ell-1)}\mathbf{\Omega} \!\odot\! \mathbf{M}\Big)
        \end{aligned}
        \end{equation}

For any permutation matrices $\mathbf{\Pi}_1$ and $\mathbf{\Pi}_2$, it is not hard to prove that \vspace{-1mm}
    \begin{equation}\label{proofeq: hadamard}
        \mathbf{\Pi}_1(\mathbf{A} \odot \mathbf{B}) \mathbf{\Pi}_2 = (\mathbf{\Pi}_1\mathbf{A}\mathbf{\Pi}_2)\odot(\mathbf{\Pi}_1\mathbf{B}\mathbf{\Pi}_2)
    \end{equation}

By using \eqref{proofeq: hadamard} and Lemma \ref{lemma:1}, \eqref{eq:subtodef} can be re-written as, \vspace{-1mm}
\begin{equation}\label{eq:using_lemma}
    \begin{aligned}
        &\mathbf{U}_\mathsf{S}^{\mathsf{V}}\mathbf{D}^{(\ell-1)}\mathbf{\Omega}\Big(\mathbf{\Omega}^{\mathsf{T}}\big(\mathbf{U}_\mathsf{S}^{\mathsf{K}}\mathbf{D}^{(\ell-1)}\big)^\mathsf{T}\mathbf{D}^{(\ell-1)}\mathbf{\Omega} \!\odot\! \mathbf{M}\Big) \\
        &\overset{(a)}{=}\mathbf{U}_\mathsf{S}^{\mathsf{V}}\mathbf{D}^{(\ell-1)}\Big(\big(\mathbf{U}_\mathsf{S}^{\mathsf{K}}\mathbf{D}^{(\ell-1)}\big)^\mathsf{T}\mathbf{D}^{(\ell-1)}  \mathbf{\Omega}\odot\mathbf{\Omega}\mathbf{M}\Big)\\
        &\overset{(b)}{=}\mathbf{U}_\mathsf{S}^{\mathsf{V}}\mathbf{D}^{(\ell-1)}\Big(\big(\mathbf{U}_\mathsf{S}^{\mathsf{K}}\mathbf{D}^{(\ell-1)}\big)^\mathsf{T}\mathbf{D}^{(\ell-1)}  \mathbf{\Omega}\odot\mathbf{M}\mathbf{\Omega}\Big)\\
        &\overset{(c)}{=}\mathbf{U}_\mathsf{S}^{\mathsf{V}}\mathbf{D}^{(\ell-1)}\Big(\big(\mathbf{U}_\mathsf{S}^{\mathsf{K}}\mathbf{D}^{(\ell-1)}\big)^\mathsf{T}\mathbf{D}^{(\ell-1)}  \odot\mathbf{M}\Big)\mathbf{\Omega} \\
        &\overset{(d)}{=} F(\mathbf{D}^{(\ell-1)})\mathbf{\Omega}
    \end{aligned}
\end{equation}
where $(a)$ comes from \eqref{proofeq: hadamard} by letting $\mathbf{\Pi}_1 = \mathbf{\Omega}, \mathbf{\Pi}_2 = \mathbf{I}$, $(b)$ is due to Lemma \ref{lemma:1}, $(c)$ comes from \eqref{proofeq: hadamard} by letting $\mathbf{\Pi}_1 = \mathbf{I}, \mathbf{\Pi}_2 = \mathbf{\Omega}$, $(d)$ is due to the definition of $F(\cdot)$.

From \eqref{eq:subtodef} and \eqref{eq:using_lemma}, we have $F(\mathbf{D}^{(\ell-1)}\mathbf{\Omega}) = F(\mathbf{D}^{(\ell-1)})\mathbf{\Omega}$, i.e., $F(\cdot)$ satisfies the nested 1D-PE property.

When the nested permutation matrix $\mathbf{\Omega}$ is replaced by arbitrary permutation matrix $\mathbf{\Pi}$, $F(\mathbf{D}^{(\ell-1)}\mathbf{\Pi}) \neq F(\mathbf{D}^{(\ell-1)})\mathbf{\Pi}$ due to \eqref{PiM}, i.e., $F(\cdot)$ does not satisfy the 1D-PE property.

Since $F(\cdot)$ and $G(\cdot)$ respectively satisfy the nested 1D-PE and 1D-PE property, it can be proved that the input-output relation of the NATT, i.e., $F(\cdot)+G(\cdot)$, satisfies the nested 1D-PE property, using similar derivations in Appendix \ref{proof:compound_func}.

\section{Proof of Proposition \ref{prop:sharing_params}}\label{proof:sharing_params}
\vspace{-1mm}
With the four conditions in the proposition, it has been proved in \cite{Gformer} that $G(\cdot)$ in \eqref{proofeq:def} satisfies the ind. 2D-PE property.
From the definition of $F(\cdot)$ in \eqref{proofeq:def}, we have, \vspace{-1mm}
\begin{equation}\label{proof:sub_to_def_2}
\begin{aligned}
&F(\mathbf{\Pi}^{\mathsf{T}}\mathbf{D^{(\ell-1)}}\mathbf{\Omega})\\
        &=\mathbf{U}_\mathsf{S}^{\mathsf{V}}\mathbf{\Pi}^{\mathsf{T}}\mathbf{D}^{(\ell-1)}\mathbf{\Omega}\Big(\big(\mathbf{U}_\mathsf{S}^{\mathsf{K}}\mathbf{\Pi}^{\mathsf{T}}\mathbf{D}^{(\ell-1)}\mathbf{\Omega}\big)^\mathsf{T}\mathbf{\Pi}^{\mathsf{T}}\mathbf{D^{(\ell-1)}}\mathbf{\Omega} \odot\mathbf{M}\Big)
\end{aligned}
\end{equation}
Further using the given conditions $\mathbf{\Pi}\mathbf{U}^{\mathsf{K}}_{\mathsf{S}} = \mathbf{U}^{\mathsf{K}}_{\mathsf{S}}\mathbf{\Pi}, \mathbf{\Pi}\mathbf{U}^{\mathsf{V}}_{\mathsf{S}} = \mathbf{U}^{\mathsf{V}}_{\mathsf{S}}\mathbf{\Pi}$, \eqref{proof:sub_to_def_2} can be re-written as, \vspace{-1mm}
\begin{equation}\label{eq:derive_partial_nested}
    \begin{aligned}
&\mathbf{\Pi}^{\mathsf{T}}\mathbf{U}_\mathsf{S}^{\mathsf{V}}\mathbf{D}^{(\ell-1)}\mathbf{\Omega}\Big(\big(\mathbf{\Pi}^{\mathsf{T}}\mathbf{U}_\mathsf{S}^{\mathsf{K}}\mathbf{D}^{(\ell-1)}\mathbf{\Omega}\big)^\mathsf{T}\mathbf{\Pi}^{\mathsf{T}}\mathbf{D}^{(\ell-1)}\mathbf{\Omega} \odot \mathbf{M}\Big)\\
        &\overset{(a)}{=}\mathbf{\Pi}^{\mathsf{T}}\mathbf{U}_\mathsf{S}^{\mathsf{V}}\mathbf{D}^{(\ell-1)}\mathbf{\Omega}\Big(\mathbf{\Omega}^\mathsf{T}\big(\mathbf{U}_\mathsf{S}^{\mathsf{K}}\mathbf{D}^{(\ell-1)}\big)^\mathsf{T}\mathbf{D}^{(\ell-1)}  \mathbf{\Omega}\odot\mathbf{M}\Big)\\
        &\overset{(b)}{=} \mathbf{\Pi}^{\mathsf{T}}F(\mathbf{D}^{(\ell-1)})\mathbf{\Omega}
    \end{aligned}
\end{equation}
where $(a)$ is due to $\mathbf{\Pi}\cdot \mathbf{\Pi}^{\mathsf{T}}=\mathbf{I}$, and $(b)$ is due to \eqref{eq:using_lemma}.

With \eqref{proof:sub_to_def_2} and \eqref{eq:derive_partial_nested}, we have $F(\mathbf{\Pi}^{\mathsf{T}}\mathbf{D}^{(\ell-1)}\mathbf{\Omega}) = \mathbf{\Pi}^{\mathsf{T}}F(\mathbf{D}^{(\ell-1)})\mathbf{\Omega}$, i.e., $F(\cdot)$ satisfies the parial-nested 2D-PE property in \eqref{eq:pre_partial_nested_2d_pe}. Since $G(\cdot)$ satisfies the ind. 2D-PE property, it can be proved that the input-output relation of NATT (P.S.), i.e., $F(\cdot)+ G(\cdot)$, satisfies the partial-nested 2D-PE property, using similar derivations in Appendix \ref{proof:compound_func}.

\end{appendices}

\bibliographystyle{IEEEtran}
\bibliography{IEEEexample}
\end{document}